\newtheorem{notation}{Notation}
\newtheorem{modelling}{Modeling}
\newtheorem{heuristic}{Heuristic}
\newcommand{\ff}[1]{\mathbb{F}_{#1}} 
\newcommand{\Fq}{\ff{q}}
\newcommand{\Fqm}{\ff{q^m}}
\newcommand{\NN}{\mathbb{N}} 
\newcommand{\Ceil}[1]{\left\lceil #1 \right\rceil}
\newcommand{\Ceiling}[1]{\Ceil{#1}}
\newcommand{\minor}[2]{\left| #1 \right|_{#2}}
\newcommand{\size}[1]{\#{#1}}
\newcommand{\any}{*} 
\newcommand{\rank}[1]{\operatorname{Rank}\mathchoice{\left(#1\right)}{(#1)}{(#1)}{(#1)}} 
\newcommand{\rw}[1]{\left| #1 \right|_{\textsc{rank}}}
\DeclareMathOperator{\Mat}{Mat}
\DeclareMathOperator{\MaxMinors}{{\bf MaxMinors}}
\DeclareMathOperator{\MaxMin}{{\bf MaxMin}}
\DeclareMathOperator{\Unfold}{{\bf UnFold}}
\newcommand{\Sign}[2]{\sigma_{#2}{(#1)}}
\newcommand{\matRing}[3]{#1^{#2 \times #3}} 
\newcommand{\mat}[1]{\boldsymbol{#1}} 
\newcommand{\word}[1]{\vec{\boldsymbol{#1}}} 
\newcommand{\zerov}{\word{0}} 
\newcommand{\trsp}[1]{#1^\intercal} 
\newcommand{\cv}{\mat{c}}
\newcommand{\ev}{\mat{e}}
\newcommand{\rv}{\mat{r}}
\newcommand{\uv}{\mat{u}}
\newcommand{\vv}{\mat{v}}
\newcommand{\xv}{\mat{x}}
\newcommand{\yv}{\mat{y}}
\newcommand{\zerom}{\mat{0}}
\newcommand{\Am}{\mat{A}}
\newcommand{\Cm}{\mat{C}}
\newcommand{\Hm}{\mat{H}}
\renewcommand{\Im}{\mat{I}}
\newcommand{\Mm}{\mat{M}}
\newcommand{\Rm}{\mat{R}}
\newcommand{\Sm}{\mat{S}}
\newcommand{\Cc}{{\mathcal C}}
\newcommand{\Cy}{\widetilde{C}}
\newcommand{\eqdef}{:=} 
\newcommand{\Th}[1]{\Theta\left( #1 \right)}
\newcommand{\Exp}{{\mathrm{D_{exp}}}}
\newcommand{\cm}{\mat{C}} 
\newcommand\bibalias[2]{%
  \@namedef{bibali@#1}{#2}%
}
\newtoks\biba@toks
\newcommand\acite[2][]{%
  \biba@toks{\cite#1}%
  \def\biba@comma{}%
  \def\biba@all{}%
  \@for\biba@one:=#2\do{%
    \@ifundefined{bibali@\biba@one}{%
      \edef\biba@all{\biba@all\biba@comma\biba@one}%
    }{%
      \PackageInfo{bibalias}{%
        Replacing citation `\biba@one' with `\@nameuse{bibali@\biba@one}'
      }%
      \edef\biba@all{\biba@all\biba@comma\@nameuse{bibali@\biba@one}}%
    }%
    \def\biba@comma{,}%
  }%
  \edef\biba@tmp{\the\biba@toks{\biba@all}}%
  \biba@tmp
}
\newcommand{\jp}[1]{\textcolor{red}{\bf [JP: #1]}}
\newcommand{\bros}[1]{\textcolor{blue}{\bf [Max: #1]}}
\renewcommand{\jp}[1]{\textcolor{red}{}}
\renewcommand{\bros}[1]{\textcolor{blue}{}}
\begin{document}

\title{ Improvements of Algebraic Attacks 
        for solving the Rank Decoding and MinRank problems
      }


\author{}
\institute{}
\author{
  Magali Bardet\inst{4,5} \and
  Maxime Bros\inst{1} \and
  Daniel Cabarcas\inst{6} \and
  Philippe Gaborit\inst{1} \and
  Ray Perlner\inst{2} \and
  Daniel Smith-Tone\inst{2,3} \and
  Jean-Pierre Tillich\inst{4} \and
  Javier Verbel\inst{6}
}


\institute{
  Univ. Limoges, CNRS, XLIM, UMR 7252, F-87000 Limoges, France \\
  \email{maxime.bros@unilim.fr}
  \and
  National Institute of Standards and Technology, USA
  \and
  University of Louisville, USA
  \and 
  Inria, 2 rue Simone Iff, 75012 Paris, France
  \and
  LITIS, University of Rouen Normandie, France
  \and
  Universidad Nacional de Colombia Sede Medellín, Medellín, Colombia
}

\maketitle
\begin{abstract}
In this paper, we show how to significantly improve 
algebraic techniques for solving the MinRank problem, which is 
ubiquitous in multivariate and rank metric code based cryptography.
In the case of the structured MinRank instances arising in the latter, 
we build upon a recent breakthrough \cite{BBBGNRT19} showing that 
algebraic attacks outperform the combinatorial ones that 
were considered state of the art up until now.
Through a slight modification of this approach, we completely avoid Gr\"obner bases 
computations for certain parameters and are left only with solving linear systems.
This does not only substantially improve the complexity, but also 
gives a convincing argument as to why algebraic techniques
work in this case. When used against the second round NIST-PQC candidates 
ROLLO-I-128/192/256, our new attack has bit complexity respectively 71, 87, and 151, 
to be compared to 117, 144, and 197 as obtained in \cite{BBBGNRT19}. 
The linear systems arise from the nullity of the maximal minors of a 
certain matrix associated to the algebraic modeling. 
We also use a similar approach to improve the algebraic MinRank solvers 
for the usual MinRank problem. 
When applied against the second round NIST-PQC candidates GeMSS and Rainbow, 
our attack has a complexity that is very close to or even slightly better 
than those of the best known attacks so far. 
Note that these latter attacks did not rely on MinRank techniques  
since the MinRank approach used to give complexities that were far away from 
classical security levels.

  \keywords{Post-quantum cryptography
    \and NIST-PQC candidates
    \and rank metric code-based cryptography
    \and algebraic attack.}
\end{abstract}

\section{Introduction}\label{sec:intro}

      \subsubsection*{Rank metric code-based cryptography.}

      In the last decade, rank metric code-based cryptography has proved to be a
      powerful alternative to traditional code-based cryptography based on the
      Hamming metric. This thread of research started with the GPT cryptosystem
      \cite{GPT91} based on Gabidulin codes \cite{G85}, which are rank metric
      analogues of Reed-Solomon codes. However, the strong algebraic structure of
      those codes was successfully exploited for attacking the original GPT
      cryptosystem and its variants with the Overbeck attack \cite{O05} (see 
\cite{OTN18} for 
the latest 
developments). 
This is similar to the Hamming
      metric situation where essentially all McEliece cryptosystems based on Reed-Solomon codes
      or variants of them have been broken. However, recently a rank metric analogue
      of the NTRU cryptosystem \cite{HPS98} has been designed and studied,
      starting with the pioneering paper \cite{GMRZ13}. 
NTRU
relies on a lattice with 
vectors of rather small Euclidean
      norm. It is precisely those vectors that allow an efficient
      decoding/deciphering process. The decryption of the cryptosystem proposed in
      \cite{GMRZ13} relies on LRPC codes with rather short vectors in the dual
      code, but this time for the rank metric. 
This cryptosystem can also be viewed as the rank metric
      analogue of the MDPC cryptosystem \cite{MTSB12} relying on short dual code vectors for the Hamming metric. 

      This new way of building rank metric code-based cryptosystems has led to a
      sequence of proposals \acite{GMRZ13,GRSZ14,LAKE,LOCKER}, culminating in
      submissions to the NIST post-quantum competition \acite{Ouroboros-R,RQC}, whose
      security relies solely on decoding codes  in rank metric 
with a ring
      structure similar to those used in lattice-based
      cryptography. Interestingly enough, one can also build signature schemes using
      the rank metric; even though early attempts which relied on masking the
      structure of a code \acite{GRSZ14a,RankSign} have been broken \cite{DT18b}, a
      promising recent approach \cite{ABGHZ19} only considers random matrices without
      structural masking.

      \subsubsection*{Decoding $\Fqm$-linear codes in Rank metric.}
      In other words, in rank metric code-based cryptography we are now only left
      with assessing the difficulty of the decoding problem in rank metric.
      The trend there 
is to consider
linear codes of length $n$
      over an extension $\Fqm$ of degree $m$ of $\Fq$, i.e., $\Fqm$-linear
      subspaces of $\Fqm^n$. Let  $(\beta_1,\dots,\beta_m)$ be any
      basis of $\Fqm$ as a $\Fq$-vector space. Then words 
      of those codes can be interpreted as matrices with entries in the
      ground field $\ff{q}$ by viewing a vector
      $\xv =(x_1,\dots,x_n) \in \Fqm^n$ as a matrix
      \(\Mat(\xv) = (X_{ij})_{i,j}\) in \(\matRing{\Fq}{m}{n}\), where
      $(X_{ij})_{1 \leq i \leq m}$ is the column vector formed by the
      coordinates of $x_j$ in 
      $(\beta_1,\dots,\beta_m)$,
      i.e., \(x_j = \beta_1X_{1j}  + \cdots + \beta_mX_{mj}\).
      Then the ``rank'' metric $d$ on $\Fqm^n$ is the rank metric on the associated
      matrix space, namely
      \[
        d(\xv,\yv) \eqdef \rw{\yv - \xv},
        \quad\text{where we define }
        \rw{\xv} \eqdef \rank{\Mat(\xv)}.
      \]
      Hereafter, we will use the following terminology.
      \begin{problem}[$(m,n,k,r)$-decoding problem]
        \label{problem_RD} \\
        \indent\emph{Input}: an $\Fqm$-basis $(\cv_1,\dots,\cv_k)$ of a subspace $\Cc$ of
        $\Fqm^n$, an integer $r \in \NN$, and a vector $\yv \in \Fqm^n$ 
such that $\rw{\yv-\cv} \leq r$ for some $\cv \in \Cc$. \\
        \indent\emph{Output}: $\cv \in \Cc$ and $\ev \in \Fqm^n$  such that $\yv=\cv+\ev$
        and $\rw{\ev} \leq r$.
        \end{problem}
      This problem is known as the Rank Decoding problem, written RD. 
      It is equivalent to the Rank Syndrome 
      Decoding problem, 
for which one uses the parity check matrix of the code. 
There are two approaches to solve RD instances: the combinatorial ones 
      such as 
\cite{GRS16,AGHT18} and the algebraic ones.
For some time it was thought that the combinatorial approach was the most threatening attack on such schemes especially when $q$ is small, 
until 
\cite{BBBGNRT19} showed that even for $q=2$ 
     the algebraic attacks outperform the combinatorial ones. 
If the conjecture made in \cite{BBBGNRT19} holds, the complexity of solving by algebraic attacks the decoding problem is of order $2^{O(r \log n)}$ with a 
     constant depending on the code rate $R=k/n$.

      Even if the decoding problem is not known to be NP-complete for these
      $\Fqm$-linear codes, there is a randomized reduction to an NP-complete problem
      \cite{GZ14} (namely to decoding in the Hamming metric).
      The region of parameters which is of interest for the NIST submissions
      corresponds to $m = \Th{n}$, $k=\Th{n}$ and $r = \Th{\sqrt{n}}$. 

      \subsubsection*{The MinRank problem.}
      The MinRank problem was first mentioned in \cite{BFS99} where its
      NP-completeness was also proven. We will consider here the homogeneous version of this problem which corresponds to
           \begin{problem}[MinRank problem]\\
        \label{problem_minrank0}
        \indent\emph{Input}: an integer $r \in \NN$ and $K$ 
        matrices $\mat{M}_1,\dots,\mat{M}_K \in \matRing{\ff{q}}{m}{n}$.\\
        \indent\emph{Output}: field elements $x_1,x_2,\dots,x_K \in \ff{q}$ that are not all zero such that
        \begin{equation*}
          \rank{\sum_{i=1}^{K}x_i\Mm_i} \leq r.
        \end{equation*}
      \end{problem}
       It plays a central role in public key cryptography. Many multivariate schemes are either directly based on the hardness of this problem \cite{C01} or strongly related to it as in \cite{P96,PBD14,PCYTD15} and the NIST post-quantum competition  candidates Gui \cite{DCPSY19a}, GeMSS \cite{CFMPPR19} or Rainbow \cite{DCPSY19}. It first appeared in this context as part of Kipnis-Shamir's attack \cite{KS99} against the HFE cryptosystem \cite{P96}. It is also central 
      in rank metric code-based cryptography, because the RD problem reduces to
      MinRank as explained in \cite{FLP08} and actually the best algorithms for solving this problem are really MinRank solvers taking advantage of the $\Fqm$ underlying structure as in \cite{BBBGNRT19}. However the parameter region generally differs. When the RD problem arising from rank metric schemes is treated as a MinRank problem we generally have $K=\Th{n^2}$ and $r$ is rather small  $r = \Th{\sqrt{n}}$) whereas for the multivariate cryptosystems $K = \Th{n}$ but $r$ is much bigger. 
 
      The current best known algorithms for solving the MinRank problem have exponential complexity. Many of them are obtained by an algebraic approach too consisting
%
in modeling the MinRank problem by an algebraic system 
and 
solving it with Gr\"obner basis techniques.
The main modelings are the Kipnis-Shamir modeling \cite{KS99} and the minors modeling \cite{FSS10}. The complexity of solving MinRank using these
      modelings has been investigated in \cite{FLP08,FSS10,VBCPS19}. In particular \cite{VBCPS19} shows that the bilinear Kipnis-Shamir modeling behaves much better than generic bilinear systems with respect to Gr\"obner basis techniques. 
      
      \subsubsection*{Our contribution.}

      Here we follow on from the approach in \cite{BBBGNRT19} and propose 
      a slightly different modeling to solve the RD problem. Roughly speaking the algebraic approach 
in \cite{BBBGNRT19} is to set up a bilinear system satisfied by the error we are looking for. This system is formed by two kinds of variables, 
the ``coefficient'' variables and the ``support'' variables. It is implicitly the modeling considered in \cite{OJ02}. The breakthrough obtained in \cite{BBBGNRT19} was to realize that
\begin{itemize}
\item the coefficient variables have to satisfy ``maximal minor'' equations: the maximal minors of a certain $r \times (n-k-1)$ matrix  (i.e. the $r \times r$ minors)
with entries being linear forms in the coefficient variables have to be equal to $0$. 
\item these maximal minors are themselves linear combinations of maximal minors $c_T$ of an $r \times n$ matrix $\Cm$ whose entries are the coefficient variables.
\end{itemize}
This gives a linear system 
in the  $c_T$'s provided there are enough linear equations. 
Moreover the original bilinear system has many solutions and there is some freedom in choosing the coefficient variables and the support variables.
With the choice made in \cite{BBBGNRT19} the information we obtain about the $c_T$'s is not enough to recover 
the coefficient variables
directly. In this case the last step of the algebraic attack still has to compute a Gr\"obner basis for the algebraic system 
consisting of the original system plus the information we have on  the $c_T$'s. 

 Our new approach starts by noticing that there is a better way to use the freedom on the coefficient variables and the support variables: we can actually specify so many coefficient variables that all those that remain unknown  are essentially equal to some maximal minor $c_T$ of $\Cm$. With this we avoid the Gr\"obner basis computation: we obtain from the knowledge of the $c_T$'s 
the coefficient variables and plugging in theses values in the original bilinear system 
we are left with solving a linear system in the support variables.
      This new approach gives 
a substantial speed-up in the computations for 
      solving the system. It results in the
      best practical efficiency and complexity bounds that are currently known for
      the decoding problem; in particular, it significantly improves upon 
\cite{BBBGNRT19}. 
       We present attacks for
        ROLLO-I-128, ROLLO-I-192, and ROLLO-I-256 with bit complexity respectively in
        70, 86, and 158, to be compared to 117, 144, and 197 obtained
        in \cite{BBBGNRT19}. The difference with \cite{BBBGNRT19} is significant since
      as there is no real quantum speed-up for solving linear systems, the best
      quantum attacks for ROLLO-I-192 remained the quantum attack based on combinatorial
      attacks, when our new attacks show that ROLLO parameters are broken and need to be changed.
       
      Our analysis is divided into two categories: the ``overdetermined'' 
      and the ``underdetermined'' case. An $(m,n,k,r)$-decoding instance is overdetermined 
      if 
      \begin{equation}
        \label{condition_intro}
        m\binom{n-k-1}r\ge \binom n r - 1.
       \end{equation}
This really corresponds to the case where we have enough linear equations by our approach to find all the $c_T$'s (and hence all the coefficient variables). In that case we obtain a complexity in 
      \begin{equation}
      \mathcal{O}\left(   m\binom{n-p-k-1}r {\binom {n-p} r}^{\omega-1}\right)
      \end{equation}
      operations in the field $\ff q$, where $\omega$ is the constant of
      linear algebra and
      $p = \max\{i : i\in \{1..n\}, m\binom{n-i-k-1}r\ge \binom{n-i}r-1\}$
      represents, in case the overdetermined condition
      \eqref{condition_intro} is comfortably fulfilled, the use of punctured
      codes.  This complexity clearly supersedes the previous results of
      \cite{BBBGNRT19} in terms of complexity and also by the fact that it
      does not require generic 
      Gr\"obner Basis algorithms.  
      In a rough way for $r = \mathcal{O}\left(\sqrt{n}\right)$
      (the type of parameters used for ROLLO and RQC), the recent
      improvements on algebraic attacks can be seen as this: before
      \cite{BBBGNRT19} the complexity for solving RD involved a term in $\mathcal{O}(n^2)$
      in the upper part of a binomial coefficient, the modeling
      in \cite{BBBGNRT19} replaced it by a term in $\mathcal{O}\left(n^{\frac{3}{2}}\right)$ whereas 
      our new modeling involves a term in $\mathcal{O}(n)$ at a similar position. This leads to a gain in the exponential 
      coefficient of order 30 \% compared to
      \cite{BBBGNRT19} and of order 50 \% compared to approaches before
      \cite{BBBGNRT19}. Notice that for ROLLO and RQC only parameters with
      announced complexities 128 and 192 bits satisfied condition (1) but
      not parameters with announced complexities 256 bits.

      When condition (1) is not fulfilled, the instance can 
      either be underdetermined or be brought back to the overdetermined area by an hybrid 
      approach using exhaustive search with exponential complexity to guess few variables
      in the system. In the underdetermined case, our approach is different from 
      \cite{BBBGNRT19}. Here we propose an approach using reduction to the MinRank problem and a new way to solve it. 
Roughly speaking we start with a quadratic modeling of MinRank that we call ``support minors modeling'' which is bilinear in the aforementioned coefficient and support variables and linear in the so called ``linear variables''. The last ones are precisely the $x_i$'s that appear in the MinRank problem. Recall that the coefficient variables are the entries of a $r \times n$ matrix $\Cm$. The crucial observation is now that for all positive integer $b$ all maximal minors of any $(r+b)\times n$ matrix obtained by adding to $\Cm$ any $b$ rows of $\sum_i x_i \Mm_i$ are equal to $0$. These minors are themselves linear combinations of terms of the form
$m c_T$ where $c_T$ is a maximal minor of $\Cm$ and $m$ a monomial of degree $b$ in the $x_i$'s. We can predict the number of independent linear equations in the $m c_T$'s we obtain this way and when the number of such equations is bigger than the number of $m c_T$'s we can recover their values and solve the MinRank problem by linearization. 
This new approach is not only effective in the underdetermined case of the RD problem it can also be quite effective 
for some multivariate proposals made to the NIST competition. In the case of the RD problem, it improves the attacks on \acite{ROLLO} made in \cite{BBBGNRT19} for the parameter sets with the largest values of $r$ (corresponding to 
parameters claiming 256 bits of security). The multivariate schemes that are affected by this new attack are for instance GeMSS and Rainbow. On GeMSS it shows
MinRank attacks together with this new way of solving MinRank come close to the best known attacks against this scheme. On Rainbow it outperforms slightly the best known attacks for certain high security parameter sets. 

At last, not only do these two new ways of solving algebraically the RD or MinRank problem outperform previous algebraic approaches in certain parameter regimes, they are also much better understood: we do not rely on heuristics based on the the first degree fall as in \cite{VBCPS19,BBBGNRT19} to analyze its complexity, but it really amounts to solve a linear system and understand the number of independent linear equations that we obtain which is something for which we have been able to give 
accurate formulas predicting the behavior we obtain experimentally.


 
\section{Notation}\label{sec:notation}

      In what follows, we use the following notation and definitions:
      \begin{itemize}
        \item Matrices and vectors are written in boldface font $\mat M$.
        \item The transpose of a matrix \(\mat{M}\) is denoted by \(\trsp{\mat{M}}\).
        \item For a given ring \(\mathcal{R}\), the set of matrices with $n$
          rows, $m$ columns and coefficients in \(\mathcal{R}\) is denoted by
          $\mathcal{R}^{n\times m}$.
        \item $\{1..n\}$ stands for the set of integers from $1$ to $n$.
        \item For a subset $I \subset \{1..n\}$, $\size{I}$ stands for the number of elements in $I$. 
        \item For two subsets $I\subset\{1..n\}$ and $J\subset\{1..m\}$,
          we write $\mat M_{I,J}$ for the submatrix of $\mat M$ formed by its rows
          (resp.~columns) with index in $I$ (resp.~$J$).
        \item For an $m \times n$ matrix $\Mm$ we use the shorthand notation $\mat M_{\any,J} = \mat{M}_{\{1..m\},J}$ and 
 $\Mm_{i,j}$ for the entry in row $i$ and column $j$.
        \item $\minor{\Mm}{}$ is the determinant of a matrix $\Mm$,
$\minor{\Mm}{I,J}$ is the determinant of the submatrix
          $\Mm_{I,J}$ and $\minor{\Mm}{\any,J}$ is 
the determinant of
          $\Mm_{\any,J}$.
        \item $\alpha \in \ff{q^m}$ is a primitive element, that is to say that 
          $(1,\alpha,\dots,\alpha^{m-1})$ is a basis of $\ff{q^m}$ seen as an
          $\ff{q}$-vector space.
        \item For $\vec{v}=(v_1,\ldots,v_n) \in \ff{q^m}^n$, the \emph{support} of $\vec v$ is
          the $\ff{q}$-vector subspace of $\ff{q^m}$ spanned by the vectors
          $v_1,\ldots,v_n$. Thus this support is the column space of the matrix
          \(\Mat(\vec{v})\) associated to \(\vec{v}\) (for any choice of basis), and
          its dimension is precisely $\rank{\Mat(\vec{v})}$.
        \item An $[n,k]$ $\Fqm$-linear code is an $\Fqm$-linear subspace of $\Fqm^n$
          of dimension $k$.
\jp{sincerement je pense que cette remarque n'est pas utile. Cela ne porte vraiment pas a confusion a mon sens.}
       
      \end{itemize}

\section{Algebraic modeling of the MinRank and the decoding problem}\label{sec:modellings}

\subsection{Modeling of MinRank}
The modeling for MinRank  we consider here is related to the modeling used for 
decoding in the rank metric in \cite{BBBGNRT19}. The starting point is that, in order to solve 
\emph{Problem} \ref{problem_minrank0}, 
we look for 
a nonzero solution $(\Sm,\Cm,\xv) \in \Fq^{m\times r}\times \Fq^{r\times n}\times \Fq^{K}$ of
\begin{equation}
\label{eq:minrank-homogeneous}
\Sm\Cm = \sum_{i=1}^{K} x_i \Mm_i.
\end{equation}
$\Sm$ is an unknown matrix whose columns give a basis for the column 
space of the matrix $\sum_{i=1}^{K} x_i \Mm_i$ of rank $\leq r$ we are looking for.
The $j$-th column of $\Cm$ represents the coordinates of the $j$-th column 
of the aforementioned matrix in this basis.
We call the entries of $\Sm$ the {\em support variables}, 
and the entries of $\Cm$ the {\em coefficient
variables}. Note that in the above equation, the variables $x_i$ 
only occur linearly. As such, we will dub them the \emph{linear variables}.

Let $\rv_j$ be the $j$-th row of $\sum_{i=1}^{K} x_i \Mm_i$. \eqref{eq:minrank-homogeneous} implies that each row $\rv_j$ is in the rowspace of $\Cm$ (or in coding theoretic terms $\rv_j$ should belong to the 
code $\Cc\eqdef \{\uv \Cm, \uv \in \Fq^r\}$. The following $(r+1)\times n$ matrix $\Cm'_j$ is therefore of rank $\leq r$:
      \[
      \Cm'_j=\begin{pmatrix}
      \vec{r_j} \\
      \Cm
      \end{pmatrix}.
      \]
Therefore, all the maximal minors of this matrix are equal to $0$. These maximal minors 
can be expressed via cofactor expansion with respect to their first row. In this way, they can be seen as bilinear forms in the $x_i$'s and the $r\times r$ minors of $\Cm$. These minors play a fundamental role in the whole paper and  we use the following notation for them.

\begin{notation}
Let $T \subset \{1..n\}$ with $\size{T}=r$. Let $c_T$ be the maximal minor of $\Cm$ corresponding to the columns of $\Cm$ that belong to $T$, i.e.
$$
c_T \eqdef \minor{\Cm}{\any,T}.
$$
\end{notation}

These considerations lead to the following algebraic modeling.
\begin{modelling}[Support Minors modeling]\label{mod:support_minors_modeling}
We consider the system of bilinear equations, given by canceling the maximal minors  of the $m$ matrices $\Cm_j'$:
\begin{equation}\label{eq:MinRank}
\left\{f = 0 \Big| f \in 
      \MaxMinors \begin{pmatrix}
      \vec{r_j} \\
      \Cm
      \end{pmatrix},\;j\in\{1..m\}\right\}.
  \end{equation}
  This system contains:
  \begin{itemize}
\item $m \binom{n}{r+1}$ bilinear equations with coefficients in $\Fq$,
\item  $K+\binom{n}{r}$ unknowns: $\xv=(x_1,\cdots,x_{K})$ and the $c_T$'s, $T \subset \{1..n\}$ with $\size{T}=r$.
\end{itemize}
We search for the solutions $x_i, c_T$'s in $\Fq$.
\end{modelling}

\begin{remark}\mbox{ }
\begin{enumerate}
\item  One of the point of having the $c_T$ as unknowns instead of the coefficients $C_{ij}$ of $\Cm$ is that, if we solve \eqref{eq:MinRank} in the $x_i$ and the $C_{ij}$ variables, then there are many solutions to \eqref{eq:MinRank} since when 
$(\xv,\Cm)$ is a solution for it, then $(\xv,\Am\Cm)$ is also a solution  for any invertible matrix $\Am$ in $\Fq^{r \times r}$.
With the $c_T$ variables we only expect a space of dimension $1$ for the $c_T$ corresponding to the 
transformation $c_T \mapsto \minor{\Am}{} c_T$ that maps a given solution of \eqref{eq:MinRank} to a new one.
\item Another benefit brought by replacing the $C_{ij}$ variables by
  the $c_T$'s is  that it decreases significantly the number
  of possible monomials for writing the algebraic system
  \eqref{eq:MinRank} (about $r!$ times less). This allows 
  for solving this system by linearization when the number of
  equations of the previous modeling exceeds the
  number of different $x_ic_T$ monomials minus 1, namely when
\begin{equation}
\label{eq:linearization}
m\binom{n}{r+1} \geq K\binom{n}{r}-1.
\end{equation}
This turns out  to be ``almost'' the case for several multivariate cryptosystem proposals based on the MinRank problem where $K$ is generally of the same order as $m$ and $n$. 
\end{enumerate}
\end{remark}

\subsection{The  approach followed in \cite{BBBGNRT19} to solve the decoding problem}
      In what follows, we consider the $(m,n,k,r)$-decoding problem 
      for a code $\Cc$ of length $n$, dimension $k$ over $\Fqm$ with a 
      $\yv \in \Fqm^n$ at distance $r$ from $\Cc$ and look for $\cv \in \Cc$ and $\ev$ such that 
      $\yv = \cv + \ev$ and $|\ev|=r$. We assume  that there is a unique solution to this problem (which is relevant for our cryptographic schemes). The starting point is the 
      Ourivksi-Johansson approach, consisting in considering the linear code 
      $\Cy = \mathcal C + \langle \yv\rangle$. From now on, let 
      $\widetilde{G}=(\Im_{k+1} \ \Rm)$ 
      (respectively $\widetilde{H}=(-\trsp{\Rm} \ \Im_{n-k-1})$) 
      be the generator matrix in systematic form (respectively a parity-check matrix) 
      of the extended code $\Cy$.
      By construction, $\ev$ belongs to $\Cy$ as well as all
      its multiples $\lambda \ev$, $\lambda \in \Fqm$. Looking for non-zero codewords in $\Cy$ of rank weight $r$ has at least $q^m-1$ different solutions, namely all the $\lambda \ev$ for 
      $\lambda \in \Fqm^\times$. 
      
      It is readily seen that finding such codewords can be done by solving the (homogeneous) MinRank problem with $\Mm_{ij} \eqdef \Mat(\alpha^{i-1} \cv_j)$ (we adopt a bivariate indexing of the $\Mm_i$'s which is more convenient here), for $(ij) \in \{1..m\}\times
     \{1..k+1\}$ and where $\cv_1,\cdots,\cv_{k+1}$ is an $\Fqm$-basis of $\Cy$.
     This is because the $\alpha^{i-1} \cv_j$'s form an $\Fq$-basis  of $\Cy$. However, the problem with this approach is that $K=(k+1)m=\Th{n^2}$ for the parameters relevant to cryptography. This is much more than for the multivariate cryptosystems based on MinRank and \eqref{eq:linearization} is far from being satisfied here. However, as observed in \cite{BBBGNRT19}, it turns out in this particular case, it is possible because of the $\Fqm$ linear structure of the code, to give an algebraic modeling that only involves the entries of $\Cm$. It is obtained by introducing a parity-check matrix for $\Cy$, that is a matrix $\Hm$ whose kernel is $\Cy$:
     $$
     \Cy = \{\cv \in \Fqm^n:\cv \trsp{\Hm} =0\}.
     $$
In our $\Fqm$ linear setting the solution $\ev$ we are looking for can be written as
\begin{equation}
\label{eq:e}
\ev =  \begin{pmatrix}
                1 & \alpha & \dots & \alpha^{m-1}
              \end{pmatrix}
              \Sm \Cm,
\end{equation}
where $\Sm \in \Fq^{m \times r}$ and $\Cm \in \Fq^{r \times n}$ play the same role as in the previous subsection: $\Sm$ represents a basis of the support of
        $\ev$ in $\left(\Fq^m\right)^r$ and $\Cm$ the coordinates of $\ev$ in
        this basis. By writing that $\ev$ should belong to $\Cy$ we obtain that
        \begin{equation}
        \label{eq:decoding}
        \begin{pmatrix}
                1 & \alpha & \dots & \alpha^{m-1}
              \end{pmatrix}
              \Sm\Cm \trsp{\Hm}=\zerom_{n-k-1}.
        \end{equation} This gives an algebraic system using only the coefficient variables as shown by
            \begin{proposition}[\cite{BBBGNRT19}, Theorem 2]
        The maximal minors of the $r \times (n-k-1)$ matrix $\Cm \trsp{\Hm}$ are all equal to $0$.
        \end{proposition}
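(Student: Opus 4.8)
The plan is to turn \eqref{eq:e} and \eqref{eq:decoding} into a rank deficiency statement over $\Fqm$, and then to descend this rank deficiency to $\Fq$ using the fact that the rank of a matrix with entries in $\Fq$ does not depend on whether it is computed over $\Fq$ or over an extension. Concretely, first I would introduce the row vector
\[
  \Um \eqdef \begin{pmatrix} 1 & \alpha & \dots & \alpha^{m-1}\end{pmatrix}\Sm \in \matRing{\Fqm}{1}{r},
\]
so that \eqref{eq:e} becomes $\ev = \Um\Cm$ and, substituting into \eqref{eq:decoding}, $\Um(\Cm\trsp{\Hm}) = \zerom_{n-k-1}$. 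Thus $\Um$ is a left annihilator of the $r\times(n-k-1)$ matrix $\Cm\trsp{\Hm}$ over $\Fqm$.

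The second step is to check that $\Um \neq \zerov$. Since we assumed $\rw{\ev}=r$, the support of $\ev$ has $\Fq$-dimension $r$, and as this support is the column space of $\Mat(\ev)=\Sm\Cm$ read through the basis $(1,\alpha,\dots,\alpha^{m-1})$, the matrix $\Sm$ must have $\Fq$-rank $r$; in particular its columns are $\Fq$-linearly independent (and $\Sm \neq \zerom$). The map $\varphi\colon \Fq^m \to \Fqm$ sending $(a_1,\dots,a_m)$ to $\sum_{i=1}^m \alpha^{i-1}a_i$ is an $\Fq$-linear isomorphism because $(1,\alpha,\dots,\alpha^{m-1})$ is an $\Fq$-basis of $\Fqm$; applying $\varphi$ to the columns of $\Sm$ gives exactly the entries of $\Um$, which are therefore $\Fq$-linearly independent, and in particular $\Um$ is nonzero.

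The third step concludes. From $\Um(\Cm\trsp{\Hm}) = \zerom$ with $\Um\neq\zerov$, the $r$ rows of $\Cm\trsp{\Hm}$ are $\Fqm$-linearly dependent, so $\rank{\Cm\trsp{\Hm}}\le r-1$ when the rank is taken over $\Fqm$. But $\Cm\trsp{\Hm}$ is a product of two matrices with entries in $\Fq$, hence has all its entries in $\Fq$, and the rank of such a matrix is the same over $\Fq$ as over $\Fqm$ (it equals the largest size of a nonvanishing minor, a quantity that does not see the field of scalars, or equivalently row reduction over $\Fq$ already puts it in echelon form). Therefore $\rank{\Cm\trsp{\Hm}}\le r-1$ over $\Fq$, and so every $r\times r$ minor of $\Cm\trsp{\Hm}$ — i.e. every maximal minor, since $n-k-1\ge r$ in the regime considered — is zero. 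I do not expect any real obstacle here: the computation is immediate once $\Um$ is introduced; the only two points worth spelling out are the non-vanishing of $\Um$ (for which the rank of $\Sm$ forced by $\rw{\ev}=r$ is the clean justification) and the passage from $\Fqm$-rank to $\Fq$-rank, which is standard but should be stated explicitly.
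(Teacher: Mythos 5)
Your proposal is correct and follows essentially the same route as the paper: introduce the row vector $\begin{pmatrix}1 & \alpha & \dots & \alpha^{m-1}\end{pmatrix}\Sm$, observe it is a nonzero left annihilator of $\Cm\trsp{\Hm}$ by \eqref{eq:decoding}, and conclude the matrix has rank at most $r-1$, so all $r\times r$ minors vanish. Your extra care in justifying the non-vanishing of this vector (via the rank of $\Sm$ forced by $\rw{\ev}=r$) and the $\Fqm$-to-$\Fq$ rank descent only makes explicit what the paper leaves implicit (the descent is in fact automatic since the minors lie in $\Fq$).
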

        \begin{proof}
        
        Consider the following vector in $\Fqm^r$: $\ev' \eqdef \begin{pmatrix}
                1 & \alpha & \dots & \alpha^{m-1}
              \end{pmatrix}
              \Sm$ whose entries generate (over $\Fq$) the subspace generated by the entries of $\ev$ (i.e. its support). Substituting $\begin{pmatrix}
                1 & \alpha & \dots & \alpha^{m-1}\end{pmatrix}  \Sm$  for $\ev'$ in \eqref{eq:decoding} yields               
                $
              \ev' \Cm \trsp{\Hm}=\zerom_{n-k-1}.
              $
              This shows that the $r \times (n-k-1)$ matrix $\Cm \trsp{\Hm}$ is of rank $\leq r-1$. \qed
              \end{proof}

        These minors $\Cm \trsp{\Hm}$ are polynomials in the entries of $\Cm$ with coefficients in $\Fqm$. Since these entries belong to $\Fq$, the nullity of each minor gives $m$ algebraic equations
        corresponding to polynomials with coefficients in $\Fq$. This involves the following operation
        \begin{notation}
          Let
          ${\mathcal{S}} \eqdef \{\sum_{j} a_{ij} m_{ij}=0,1 \leq i
          \leq N\}$ be a set of polynomial equations where the
          $m_{ij}$'s are the monomials in the unknowns that are
          assumed to belong to $\Fq$, whereas the $a_{ij}$'s are known
          coefficients that belong to $\Fqm$. We define the
          $a_{ijk}$'s as $a_{ij} =\sum_{k=0}^{m-1} a_{ijk} \alpha^k$,
          where the $a_{ijk}$'s belong to $\Fq$. From this we can
          define the system ``unfolding'' over $\Fq$ as
        $$
        \Unfold{(\mathcal{S})} \eqdef \left\{\sum_{j} a_{ijk} m_{ij}=0,1 \leq i \leq N, 0 \leq k \leq m-1\right\}.
        $$
        \end{notation}
The important point is that the solutions of $\mathcal{S}$ over $\Fq$ are
        exactly the solutions of $\Unfold{(\mathcal{S})}$ over $\Fq$,
        so that in that sense the two systems are equivalent.
    
    By using the Cauchy-Binet formula, it is proved \cite[Prop. 1]{BBBGNRT19} that the maximal minors of $\Cm \trsp{\Hm}$,  which are polynomials of degree $\leq r$ in the coefficient variables $C_{ij}$, can actually  be expressed as {\em linear} combinations of the $c_T$'s. In other words we obtain $m \binom{n-k-1}{r}$ linear equations over $\Fq$  by ``unfolding'' the $\binom{n-k-1}{r}$ maximal minors of $\Cm \trsp{\Hm}$.  We denote such a system by
    \begin{equation}
    \label{eq:unfoldmaxminors}
    \Unfold\left(\{f=0 | f \in  \MaxMinors(\Cm\trsp{\Hm})\}\right).
    \end{equation}
    It is straightforward to check that some variables in $\Cm$ and
    $\Sm$ can be specialized.  The choice which is made
    in~\cite{BBBGNRT19} is to specialize $\Sm$ with its $r$ first rows
    equal to the identity ($\Sm_{\{1..r\},\any}=\Im_r$), its first
    column to $\trsp{\vec 1} = \trsp{(1,0,\dots,0)}$ and $\Cm$ has its first column
    equal to $\trsp{\vec 1}$. It is proved in~\cite[Section
    3.3]{BBBGNRT19} that if the first coordinate of $\ev$ is nonzero
    and the top $r\times r$ block of $\Sm$ is invertible, then the previous specialized system has a unique solution.
 Moreover, this will always be
    the case up to a permutation of the coordinates of the codewords
    or a change of $\Fqm$-basis.

    It is proved in     \cite[Prop. 2]{BBBGNRT19} that a degree-$r$ Gr\"obner basis of the unfolded
polynomials  $\MaxMinors$  is obtained by solving the corresponding linear system in the $c_T$'s. 
However, this strategy of specialization does not reveal the entries of $\Cm$  (it only reveals the values of the $c_T$'s). To finish the calculation it still remains to compute a Gr\"obner basis of the whole algebraic system as done in \cite[Step 5, \S 6.1]{BBBGNRT19}). There is a simple way to avoid this computation by specializing the variables of $\Cm$  in a different way. This is the new approach we  explain now.
 
\subsection{The new approach : specializing the identity in $\Cm$}       
As in the previous approach we note that if $(\Sm,\Cm)$ is a
solution of \eqref{eq:decoding} then $(\Sm\Am^{-1},\Am\Cm)$ is also a
solution of it for any invertible matrix $\Am$ in
$\Fq^{r \times r}$. Now, when the first $r$ columns of a
solution $\Cm$ form a invertible matrix, we will still have a solution with the specialization
$$\Cm =   \begin{pmatrix}
  \Im_r & \Cm'
\end{pmatrix}.$$
We can also specialize the first column of $\Sm$ to $\trsp{\vec 1} = \trsp{
  \begin{pmatrix}
    1 & 0 & \dots & 0
  \end{pmatrix}
}$.
If the first $r$ columns of $\Cm$ are not independent, it suffices as in ~\cite[Algo. 1]{BBBGNRT19} to make several different attempts of choosing $r$ columns. The point of this specialization is that 
\begin{itemize}
\item the corresponding $c_T$'s are equal to the entries $C_{ij}$ of
  $\Cm$ up to an unessential factor $(-1)^{r+i}$ whenever
  $T=\{1..r\} \backslash\{i\}\cup\{j\}$ for any $i\in\{1..r\}$
  and $j\in\{r+1..n\}$. This follows on the spot by writing the
  cofactor expansion of the minor
  $c_T = \minor{\Cm}{\any,\{1..r\}\backslash\{i\}\cup\{j\}}$. Solving
  the linear system in the $c_T$'s corresponding to
  \eqref{eq:unfoldmaxminors}
  yields now directly the coefficient variables $C_{ij}$. This avoids the subsequent Gr\"obner basis computation, since once we have $\Cm$ we obtain $\Sm$ directly by solving \eqref{eq:decoding} which has become a linear system.
\item it is readily shown that any solution of \eqref{eq:unfoldmaxminors}
  is actually a projection on the $C_{ij}$ variables of a solution $(\Sm,\Cm)$ of the whole system (see Proposition \ref{prop:projection}). This justifies the whole approach.
\end{itemize}
In other words we are interested here in the following modeling   
\begin{modelling}\label{mod:system} We consider the system of linear equations, given by unfolding all maximal minors of
  $ \begin{pmatrix} \Im_r & \Cm'
        \end{pmatrix}\trsp{\Hm}
        $:
  \begin{equation}
    \label{eq:unfoldmaxminorsnew}
\Unfold\left(    \left\{ f = 0 \Big| f \in  \MaxMinors\left( \begin{pmatrix}
            \Im_r & \Cm'
          \end{pmatrix}\trsp{\Hm}\right)\right\} \right).
  \end{equation}
  This system contains:
\begin{itemize}
\item $m\binom{n-k-1}{r}$ linear equations with coefficients in $\Fq$,
\item   $\binom{n}{r}-1$ unknowns: the $c_T$'s, $T \subset \{1..n\}$ with $\size{T}=r$, 
$T \neq \{1..r\}$.
\end{itemize}
We search for the solutions $c_T$'s in $\Fq$.
\end{modelling}
Note that from the specialization, $c_{\{1..r\}}=1$ is not an unknown.
For the reader's convenience, let us recall the specific form of these equations which is obtained by unfolding the following polynomials (see  \cite[Prop. 2]{BBBGNRT19} and its  proof).
         \begin{proposition}\label{eq:oldarticle}
           $\MaxMinors(\Cm\trsp{\Hm})$ contains
          $\binom{n-k-1}r$ polynomials of degree $r$ over $\ff {q^m}$, indexed
          by the subsets $J\subset\{1..n-k-1\}$ of size $r$, that are the
        \begin{eqnarray}
          \label{eq:PJ}
          P_J &=& 
          \sum_{\substack{T_1\subset\{1..k+1\}, {T_2} \subset J,\\ \size{T_1} + \size{T_2} = r\\T = T_1\cup (T_2+k+1)}}(-1)^{\Sign{T_2}{J}} \minor{\mat R}{T_1, J\backslash T_2} c_T,
        \end{eqnarray} 
        where the sum is over all subsets $T_1\subset\{1..k+1\}$ and
        $T_2$ subset of $J$, with $\size{T_1}+\size{T_2} = r$, and
        $\Sign{T_2}{J}$ is an integer depending on $T_2$ and $J$.  We denote
        by ${T_2}+k+1$ the set $\{i+k+1 : i \in T_2\}$.
        \end{proposition}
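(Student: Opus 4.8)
The plan is to read off the polynomials $P_J$ from the Cauchy--Binet formula applied to the product $\Cm\trsp{\Hm}$, combined with the explicit block shape of the parity-check matrix. Since $\Cm$ has $r$ rows and $\Cm\trsp{\Hm}$ has $n-k-1\ge r$ columns, the maximal minors of $\Cm\trsp{\Hm}$ are precisely its $r\times r$ minors $\minor{\Cm\trsp{\Hm}}{\any,J}$, one for each $J\subset\{1..n-k-1\}$ with $\size{J}=r$; this already gives the announced indexing and the count $\binom{n-k-1}r$.

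First I would apply Cauchy--Binet to get
\[
\minor{\Cm\trsp{\Hm}}{\any,J}=\sum_{\substack{T\subset\{1..n\}\\\size{T}=r}}\minor{\Cm}{\any,T}\,\minor{\trsp{\Hm}}{T,J}=\sum_{\substack{T\subset\{1..n\}\\\size{T}=r}}c_T\,\minor{\trsp{\Hm}}{T,J},
\]
so that the whole computation reduces to evaluating the scalars $\minor{\trsp{\Hm}}{T,J}\in\Fqm$. Here $\trsp{\Hm}$ is the $n\times(n-k-1)$ matrix whose top $k+1$ rows are $-\mat R$ and whose bottom $n-k-1$ rows are $\Im_{n-k-1}$. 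Splitting a size-$r$ set $T$ as $T=T_1\cup(T_2+k+1)$ with $T_1=T\cap\{1..k+1\}$ and $T_2\subset\{1..n-k-1\}$, the rows of $\trsp{\Hm}_{T,J}$ indexed by $T_2+k+1$ are standard basis row vectors restricted to the columns $J$, and such a row vanishes identically as soon as its index lies outside $J$. Hence $\minor{\trsp{\Hm}}{T,J}=0$ unless $T_2\subseteq J$, which is exactly the summation range appearing in \eqref{eq:PJ}.

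When $T_2\subseteq J$, I would compute $\minor{\trsp{\Hm}}{T,J}$ by a Laplace expansion along the $\size{T_2}$ rows indexed by $T_2+k+1$: each of them has a single nonzero entry, equal to $1$, located in the column of $J$ that corresponds to the matching element of $T_2$. Performing this expansion deletes those rows and those columns and leaves $\minor{-\mat R}{T_1,J\backslash T_2}=(-1)^{\size{T_1}}\minor{\mat R}{T_1,J\backslash T_2}$, up to the sign generated by the expansion. Absorbing that sign together with $(-1)^{\size{T_1}}$ into a single exponent $\Sign{T_2}{J}$ — which depends only on $T_2$ and $J$, since $\size{T_1}=r-\size{T_2}$ is itself a function of $T_2$ — one gets $\minor{\trsp{\Hm}}{T,J}=(-1)^{\Sign{T_2}{J}}\minor{\mat R}{T_1,J\backslash T_2}$, and substituting this into the Cauchy--Binet sum above yields exactly \eqref{eq:PJ}. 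Finally, each $c_T=\minor{\Cm}{\any,T}$ is a homogeneous degree-$r$ polynomial in the entries $C_{ij}$ and each $\minor{\mat R}{T_1,J\backslash T_2}$ lies in $\Fqm$, so every $P_J$ is a polynomial of degree $r$ over $\Fqm$, as claimed.

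The routine-but-delicate part is the sign bookkeeping: pinning down $\Sign{T_2}{J}$ requires tracking the positions of the elements of $T_2$ inside $J$ and of $T_1$ and $T_2+k+1$ inside $T$ through the iterated cofactor expansion. Since only the combined parity matters here — the precise signs never affect which monomials $c_T$ occur in $P_J$ — I would simply take $\Sign{T_2}{J}$ to be the resulting exponent and relegate its closed form to a remark, mirroring \cite[Prop.~1 and Prop.~2]{BBBGNRT19}.
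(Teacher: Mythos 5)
Your proof is correct and follows essentially the same route as the paper, which does not reprove this statement but points to \cite[Prop.~1 and 2]{BBBGNRT19}, where the formula is likewise obtained by Cauchy--Binet applied to $\Cm\trsp{\Hm}$ together with the systematic form $\trsp{\Hm}=\begin{pmatrix}-\Rm\\ \Im_{n-k-1}\end{pmatrix}$, which kills all terms with $T_2\not\subset J$ and reduces the surviving minors to $\pm\minor{\Rm}{T_1,J\backslash T_2}$. Leaving $\Sign{T_2}{J}$ unspecified is consistent with the statement, which only asserts that the sign depends on $T_2$ and $J$.
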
 

Let us show now that the solutions of this linear system are projections of the solutions of the original system. For this purpose, let us bring in
\begin{itemize}
\item The original system~\eqref{eq:decoding} over $\Fqm$ obtained with the aforementioned specialization
 \begin{eqnarray}
          \label{eqn:oj-modelling-specialized-in-C}
        \mathcal F_C  &=& \left\{
          \begin{pmatrix}
            1 & \alpha & \cdots & \alpha^{m-1}
          \end{pmatrix}
          \begin{pmatrix}
            \trsp{\vec 1} &  \Sm'
          \end{pmatrix}
          \begin{pmatrix}
            \Im_r & \Cm'
          \end{pmatrix}\trsp{\Hm}=\zerom_{n-k-1}
                    \right\},
        \end{eqnarray}
  where $\trsp{\vec 1} = \trsp{
          \begin{pmatrix}
            1 & 0 & \dots & 0
          \end{pmatrix}
      }$, 
        $\Sm =
        \begin{pmatrix}
          \trsp{\vec 1} & \Sm'
        \end{pmatrix}$ 
        and $\Cm =
        \begin{pmatrix}
          \Im_r & \Cm'
        \end{pmatrix}$.        
\item The  system in the coefficient variables we are interested in\\
$
        \mathcal F_M = \left\{f = 0 \Big| f \in \MaxMinors\left(
          \begin{pmatrix}
        \Im_r & \Cm'
        \end{pmatrix}\trsp{\Hm}
        \right)\right\}.
$
\item Let $V_{\ff{q}}({\mathcal F_C})$ be the set of solutions
        of~\eqref{eqn:oj-modelling-specialized-in-C} with all variables in
        $\Fq$, that is\\
 $           V_{\ff q}(\mathcal F_C) = $

 $           \left\{(\Sm^*,\Cm^*)\in{\Fq}^{m(r-1) + r(n-r)}  :\begin{pmatrix}
            1 & \alpha & \cdots & \alpha^{m-1}
          \end{pmatrix}
         \begin{pmatrix}
           \trsp{\vec 1} & \Sm^*
        \end{pmatrix}
          \begin{pmatrix}
            \Im_r & \Cm^*
          \end{pmatrix}\trsp{\Hm}
         = \zerov\right\}.$
      \item
       Let $V_{\ff q}(\mathcal F_M)$ be the set of solutions of 
       $\mathcal F_M$
       with all variables in
        $\ff q$, i.e.\\
        $
        V_{\ff q}(\mathcal F_M) = \left\{\Cm^*\in{\ff q}^{r(n-r)}  : \mathtt{Rank}_{\ff {q^m}}\left(\begin{pmatrix}
        \Im_r & \Cm^*
        \end{pmatrix}\trsp{\Hm} \right) < r\right\}.$
\end{itemize}

        With these notations at hand, we  now show that solving the decoding problem is left to solve the
        $\MaxMinors$ system depending only on the $\Cm$ variables.
%

          \begin{proposition}\label{prop:projection}
            If $\ev$ can be uniquely decoded and has rank $r$, then 
            \begin{equation}\label{eq:solutionsmaxminors}
            V_{\ff q}(\mathcal F_M) = \left\{\cm^*\in\Fq^{r(n-r)} : \exists \Sm^*\in\Fq^{m(r-1)} \text{ s.t. } (\Sm^*,\cm^*)\in V_{\ff q}(\mathcal F_C)\right\},    
          \end{equation}
          that is $V_{\ff q}(\mathcal F_M)$ is the projection
          of  $V_{\ff q}(\mathcal F_C)$ on the last $r(n-r)$ coordinates.
        \end{proposition}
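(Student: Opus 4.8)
The plan is to prove the two inclusions in \eqref{eq:solutionsmaxminors} separately. I work under the standing assumptions spelled out just before the statement (harmless up to a permutation of the coordinates of the codewords or a change of $\Fqm$-basis): the first $r$ columns of the matrix $\Cm$ attached to $\ev$ are linearly independent, and the first coordinate $e_1$ of $\ev$ is nonzero. For a candidate $\cm^*\in\Fq^{r(n-r)}$ I abbreviate $\Wm\eqdef\begin{pmatrix}\Im_r & \cm^*\end{pmatrix}\in\Fq^{r\times n}$, so that $\cm^*\in V_{\ff q}(\mathcal F_M)$ means exactly $\rank{\Wm\trsp{\Hm}}<r$ over $\Fqm$.

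For the inclusion ``$\supseteq$'' (projection of $V_{\ff q}(\mathcal F_C)$ contained in $V_{\ff q}(\mathcal F_M)$) I would just repeat the argument of the quoted Proposition of \cite{BBBGNRT19}. If $(\Sm^*,\cm^*)\in V_{\ff q}(\mathcal F_C)$, set $\ev'\eqdef\begin{pmatrix}1 & \alpha & \cdots & \alpha^{m-1}\end{pmatrix}\begin{pmatrix}\trsp{\vec 1} & \Sm^*\end{pmatrix}\in\Fqm^r$; its first coordinate is $1$, so $\ev'\neq\zerov$, and \eqref{eqn:oj-modelling-specialized-in-C} reads $\ev'\,\Wm\trsp{\Hm}=\zerov$. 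Hence $\Wm\trsp{\Hm}$ has a nonzero left kernel over $\Fqm$, i.e.\ $\rank{\Wm\trsp{\Hm}}<r$, which is exactly $\cm^*\in V_{\ff q}(\mathcal F_M)$.

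For the reverse inclusion ``$\subseteq$'', start from $\cm^*\in V_{\ff q}(\mathcal F_M)$: since $\Wm\trsp{\Hm}$ has rank $<r$ over $\Fqm$, pick a nonzero left kernel vector $\uv\in\Fqm^r$ and set $\word{w}\eqdef\uv\,\Wm\in\Fqm^n$. I would check three facts. First, $\word{w}\trsp{\Hm}=\uv\,\Wm\trsp{\Hm}=\zerov$, so $\word{w}\in\Cy$. Second, writing $\uv=\begin{pmatrix}1 & \alpha & \cdots & \alpha^{m-1}\end{pmatrix}\Um$ with $\Um\in\Fq^{m\times r}$ the coordinate matrix of $\uv$, one gets $\Mat(\word{w})=\Um\Wm$, whose rank is $\le\rank{\Wm}\le r$, hence $\rw{\word{w}}\le r$. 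Third, since the first $r$ columns of $\Wm$ are $\Im_r$, the first $r$ coordinates of $\word{w}$ are exactly $\uv\neq\zerov$, so $\word{w}\neq\zerov$. Granting that every nonzero codeword of $\Cy$ of rank weight $\le r$ is a scalar multiple $\lambda\ev$ with $\lambda\in\Fqm^\times$, we obtain $\word{w}=\lambda\ev$ with $\lambda\neq 0$; comparing first coordinates gives that the first entry of $\uv$ is $\lambda e_1\neq 0$. Rescaling $\uv$ by $(\lambda e_1)^{-1}$ yields $\uv'$ with first coordinate $1$ and $\uv'\,\Wm=e_1^{-1}\ev$; its coordinate matrix therefore has first column $\trsp{\vec 1}$, say it is $\begin{pmatrix}\trsp{\vec 1} & \Sm^*\end{pmatrix}$ with $\Sm^*\in\Fq^{m(r-1)}$, and then $\begin{pmatrix}1 & \alpha & \cdots & \alpha^{m-1}\end{pmatrix}\begin{pmatrix}\trsp{\vec 1} & \Sm^*\end{pmatrix}\Wm\trsp{\Hm}=e_1^{-1}\ev\trsp{\Hm}=\zerov$ because $\ev\in\Cy$. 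Thus $(\Sm^*,\cm^*)\in V_{\ff q}(\mathcal F_C)$ and $\cm^*$ lies in the projection, finishing the inclusion.

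The step I expect to be the real obstacle is the classification used above, that the only nonzero codewords of $\Cy=\mathcal C+\langle\yv\rangle$ of rank weight at most $r$ are the scalar multiples of $\ev$; I would isolate it as a preliminary lemma. Any word of $\Cy$ has the form $\cv'+\mu\ev$ with $\cv'\in\mathcal C$ and $\mu\in\Fqm$. When $\mu\neq0$, dividing by $\mu$, using that $\rw{\cdot}$ is invariant under nonzero scaling, and invoking the uniqueness of the decoding of $\yv$ forces $\cv'=0$, so the word is $\mu\ev$. The delicate case is $\mu=0$, i.e.\ a nonzero word of $\mathcal C$ itself of rank weight $\le r$; ruling it out is precisely where one must use that $\ev$ is genuinely uniquely decodable for the instance at hand (equivalently, that $\mathcal C$ carries no nonzero word of rank weight $\le r$), which holds for the cryptographic parameters considered and should be recorded as part of the hypothesis. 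Everything else is routine linear algebra over $\Fqm$ combined with the $\Fq$-linear identification $\Mat(\cdot)$.
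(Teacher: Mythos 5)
Your proof is correct and follows the same global skeleton as the paper's (two inclusions; the forward one is identical; the reverse one is built on a nonzero left-kernel vector $\uv$ of $\begin{pmatrix}\Im_r & \cm^*\end{pmatrix}\trsp{\Hm}$), but the way you force the first coordinate of that kernel vector to be nonzero is genuinely different. You identify the word $\uv\begin{pmatrix}\Im_r & \cm^*\end{pmatrix}$ as a nonzero codeword of $\Cy$ of rank weight at most $r$, invoke a classification lemma (every such word is $\lambda\ev$), and then need the extra standing hypothesis $e_1\neq 0$ to conclude that the first entry of $\uv$ is nonzero. The paper gets this more cheaply and without that hypothesis: if the first coordinate $S_1^*$ of the kernel vector were zero, then $(0,S_2^*,\dots,S_r^*)\begin{pmatrix}\Im_r & \cm^*\end{pmatrix}$ would be a nonzero word of $\Cy$ of rank weight at most $r-1$ (its entries are $\Fq$-combinations of $S_2^*,\dots,S_r^*$), which already contradicts uniqueness of a weight-$\le r$ decoding; no identification with $\lambda\ev$ and no condition on $e_1$ is needed, and indeed the proposition as stated does not assume $e_1\neq 0$ (so, strictly, you prove it under a mild additional normalization that the paper's argument avoids; one can check that when $e_1=0$ both sides of \eqref{eq:solutionsmaxminors} are empty, so the unrestricted statement still holds). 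What your route buys is a cleaner isolation of the uniqueness assumption actually used: the caveat you flag about nonzero words of $\Cc$ itself of weight $\le r$ (your $\mu=0$ case) is exactly what the paper's phrase ``an error of weight less than $r$ solution of the decoding problem'' glosses over (its contradiction word could in principle lie in $\Cc$ rather than come from a decoding of $\yv$), so making that hypothesis explicit is a genuine gain in rigor, at the price of the extra assumption on $\ev$ and a slightly longer argument.
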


        \begin{proof}
          Let $(\Sm^*, \cm^*)\in V_{\ff q}(\mathcal F_C)$, then 
       $ \begin{pmatrix}
            1 & S_2^* & \dots S_r^*
          \end{pmatrix} =   \begin{pmatrix}
            1 & \alpha & \cdots & \alpha^{m-1}
          \end{pmatrix}
          \begin{pmatrix}
            \trsp{\vec 1} & \Sm^*
          \end{pmatrix}$
        belongs to the left kernel of the matrix $
          \begin{pmatrix}
        \Im_r & \cm^*
        \end{pmatrix}\trsp{\Hm}$. Hence this matrix has rank less than $r$,
        and $\cm^*\in V_{\ff q}(\mathcal F_M)$. Reciprocally, if
        $\cm^*\in V_{\ff q}(\mathcal F_M)$, then the matrix
        $\begin{pmatrix} \Im_r & \cm^*
        \end{pmatrix}\trsp{\Hm}$ has rank less than $r$, hence its left kernel over
        $\ff {q^m}$ contains a non zero element
        $(S_1^*,\dots,S_r^*) =(1,\alpha,\dots,\alpha^{m-1})\Sm^*$ with the
        coefficients of $\Sm^*$ in $\ff q$. But $S_1^*$ cannot be zero, as it
        would mean that $(0,S_2^*,\dots,S_r^*)
        \begin{pmatrix}
          \Im_r&\cm^*
        \end{pmatrix}$ is an error of weight less than $r$ solution of the
        decoding problem, and we assumed there is only one error of weight
        exactly $r$ solution of the decoding problem. Then,
        $({S_1^*}^{-1}(S_2^*,\dots,S_r^*),\cm^*)\in V_{\ff q}(\mathcal F_C)$.
        \qed
        \end{proof}
%

\section{Solving RD: overdetermined case\label{attack_1}}

        \pgfplotstableread[row sep=\\]{
        X Y \\20 60.88 \\ 22 64.07 \\ 24 66.98 \\ 26 69.30 \\ 28 71.38 \\ 30 69.19 \\ 32 70.86 \\ 34 72.41 \\ 36 73.94 \\ 38 70.95 \\ 40 72.18 \\ 42 73.34 \\ 44 69.78 \\ 46 70.95 \\ 48 67.23 \\ 50 68.22 \\ 52 64.49 \\ 54 65.32 \\ 56 66.12 \\ 58 61.77 \\ 60 62.52 \\ 62 63.25 \\ 64 63.62 \\ 66 64.31 \\ 68 64.52 \\ 70 65.18 \\ 72 65.87 \\ 74 65.76 \\ 76 66.39 \\ 78 67.00 \\ 80 67.59 \\ 82 67.85 \\ 84 68.42 \\ 86 68.63 \\ 88 68.91 \\ 90 69.08 \\ 92 69.62 \\ 94 70.15 \\ 96 70.38 \\ 98 70.89 \\ 100 71.14 \\ 102 71.29 \\ 104 71.78 \\ 106 72.26 \\ 108 72.19 \\ 110 72.66 \\ 112 73.12 \\ 114 73.26 \\ 116 73.45 \\ 118 73.90 \\ 120 74.33 \\ 122 74.49 \\ 124 74.67 \\ 126 75.09 \\ 128 75.24 \\ 130 75.41 \\ 132 75.82 \\ 134 76.22 \\ 136 76.23 \\ 138 76.34 \\ 140 76.74 \\ 142 77.12 \\ 144 77.24 \\ 146 77.61 \\ 148 77.72 \\ 150 78.09 \\ 152 78.24 \\ 154 78.60 \\ 156 78.77 \\ 158 78.87 \\ 160 79.22 \\ 162 79.32 \\ 164 79.69 \\ 166 79.64 \\ 168 79.98 \\ 170 80.32 \\ 172 80.41 \\ 174 80.74 \\ 176 80.83 \\ 178 81.17 \\ 180 81.32 \\ 182 81.41 \\ 184 81.72 \\ 186 81.81 \\ 188 81.95 \\ 190 82.26 \\ 192 82.34 \\ 194 82.47 \\ 196 82.76 \\ 198 83.06 \\ 200 83.19 \\ 202 83.28 \\ 204 83.56 \\ 206 83.69 \\ 208 83.77 \\ 210 84.06 \\ 212 84.14 \\ 214 84.43 \\ 216 84.39 \\ 218 84.66 \\ 220 84.74 \\ 222 85.01 \\ 224 85.08 \\ 226 85.35 \\ 228 85.44 \\ 230 85.54 \\ 232 85.80 \\ 234 85.89 \\ 236 86.05 \\ 238 86.12 \\ 240 86.37 \\ 242 86.44 \\ 244 86.69 \\ 246 86.76 \\ 248 87.01 \\ 250 87.08 \\ 252 87.23 \\ 254 87.29 \\ 256 87.54 \\ 258 87.60 \\ 260 87.84 \\ 262 87.91 \\ 264 88.14 \\ 266 88.23 \\ 268 88.30 \\ 270 88.54 \\ 272 88.63 \\ 274 88.69 \\ 276 88.92 \\ 278 89.02 \\ 280 89.08 \\ 282 89.30 \\ 284 89.37 \\ 286 89.49 \\ 288 89.55 \\ 290 89.77 \\ 292 89.83 \\ 294 90.05 \\ 296 90.11 \\ 298 90.20 \\ 300 90.41 \\ 
        }{\dataTheoOptCinq}

        \pgfplotstableread[row sep=\\]{
          X Y\\20 72.88 \\ 22 80.83 \\ 24 84.80 \\ 26 87.99 \\ 28 90.81 \\ 30 93.66 \\ 32 95.93 \\ 34 98.01 \\ 36 100.0 \\ 38 102.0 \\ 40 103.6 \\ 42 105.2 \\ 44 106.6 \\ 46 108.2 \\ 48 109.5 \\ 50 110.8 \\ 52 106.4 \\ 54 107.5 \\ 56 113.5 \\ 58 109.8 \\ 60 110.7 \\ 62 111.7 \\ 64 107.0 \\ 66 107.8 \\ 68 108.7 \\ 70 109.5 \\ 72 110.3 \\ 74 105.4 \\ 76 106.1 \\ 78 106.8 \\ 80 107.5 \\ 82 102.5 \\ 84 103.1 \\ 86 97.99 \\ 88 92.90 \\ 90 93.48 \\ 92 94.05 \\ 94 94.60 \\ 96 89.38 \\ 98 89.91 \\ 100 84.69 \\ 102 85.19 \\ 104 85.68 \\ 106 86.17 \\ 108 86.11 \\ 110 86.59 \\ 112 87.06 \\ 114 87.52 \\ 116 87.72 \\ 118 88.16 \\ 120 88.60 \\ 122 88.77 \\ 124 88.96 \\ 126 89.38 \\ 128 89.83 \\ 130 90.01 \\ 132 90.42 \\ 134 90.82 \\ 136 90.84 \\ 138 90.97 \\ 140 91.36 \\ 142 91.75 \\ 144 92.14 \\ 146 92.52 \\ 148 92.63 \\ 150 93.00 \\ 152 93.16 \\ 154 93.53 \\ 156 93.70 \\ 158 94.06 \\ 160 94.41 \\ 162 94.76 \\ 164 94.88 \\ 166 95.09 \\ 168 95.19 \\ 170 95.53 \\ 172 95.86 \\ 174 96.19 \\ 176 96.29 \\ 178 96.64 \\ 180 96.79 \\ 182 97.11 \\ 184 97.20 \\ 186 97.51 \\ 188 97.66 \\ 190 97.97 \\ 192 98.27 \\ 194 98.40 \\ 196 98.70 \\ 198 98.79 \\ 200 98.93 \\ 202 99.22 \\ 204 99.51 \\ 206 99.65 \\ 208 99.73 \\ 210 100.0 \\ 212 100.3 \\ 214 100.4 \\ 216 100.5 \\ 218 100.8 \\ 220 100.9 \\ 222 101.1 \\ 224 101.4 \\ 226 101.5 \\ 228 101.8 \\ 230 101.9 \\ 232 102.1 \\ 234 102.2 \\ 236 102.4 \\ 238 102.5 \\ 240 102.7 \\ 242 103.0 \\ 244 103.1 \\ 246 103.3 \\ 248 103.6 \\ 250 103.8 \\ 252 103.8 \\ 254 104.0 \\ 256 104.1 \\ 258 104.3 \\ 260 104.6 \\ 262 104.6 \\ 264 104.9 \\ 266 105.0 \\ 268 105.2 \\ 270 105.3 \\ 272 105.5 \\ 274 105.6 \\ 276 105.8 \\ 278 105.9 \\ 280 106.2 \\ 282 106.2 \\ 284 106.5 \\ 286 106.4 \\ 288 106.7 \\ 290 106.9 \\ 292 106.9 \\ 294 107.2 \\ 296 107.4 \\ 298 107.5 \\ 300 107.5 \\ 
         }{\dataTheoOptSix}
          
        \pgfplotstableread[row sep=\\]{
        X Y\\ 20 82.56 \\ 22 93.57 \\ 24 98.75 \\ 26 105.7 \\ 28 112.2 \\ 30 115.9 \\ 32 118.9 \\ 34 125.9 \\ 36 128.6 \\ 38 131.3 \\ 40 135.0 \\ 42 136.9 \\ 44 138.8 \\ 46 140.7 \\ 48 147.7 \\ 50 149.4 \\ 52 145.5 \\ 54 152.5 \\ 56 154.8 \\ 58 155.4 \\ 60 157.5 \\ 62 158.7 \\ 64 153.3 \\ 66 160.3 \\ 68 162.1 \\ 70 163.1 \\ 72 164.1 \\ 74 158.5 \\ 76 165.5 \\ 78 167.0 \\ 80 167.8 \\ 82 168.2 \\ 84 169.6 \\ 86 169.9 \\ 88 164.5 \\ 90 165.2 \\ 92 166.0 \\ 94 166.7 \\ 96 167.4 \\ 98 168.1 \\ 100 162.0 \\ 102 162.7 \\ 104 163.3 \\ 106 170.3 \\ 108 157.8 \\ 110 164.8 \\ 112 165.8 \\ 114 166.3 \\ 116 160.1 \\ 118 160.6 \\ 120 161.2 \\ 122 161.7 \\ 124 155.4 \\ 126 162.4 \\ 128 156.4 \\ 130 157.0 \\ 132 157.4 \\ 134 157.9 \\ 136 144.8 \\ 138 145.2 \\ 140 145.7 \\ 142 146.1 \\ 144 146.5 \\ 146 147.0 \\ 148 147.4 \\ 150 147.8 \\ 152 141.3 \\ 154 141.7 \\ 156 135.3 \\ 158 135.7 \\ 160 136.1 \\ 162 136.4 \\ 164 136.8 \\ 166 123.5 \\ 168 123.8 \\ 170 124.2 \\ 172 124.5 \\ 174 124.9 \\ 176 125.2 \\ 178 118.7 \\ 180 112.2 \\ 182 112.5 \\ 184 112.8 \\ 186 113.1 \\ 188 113.3 \\ 190 113.6 \\ 192 113.9 \\ 194 114.0 \\ 196 114.3 \\ 198 114.6 \\ 200 114.8 \\ 202 115.1 \\ 204 115.4 \\ 206 115.5 \\ 208 115.8 \\ 210 116.1 \\ 212 116.3 \\ 214 116.6 \\ 216 116.6 \\ 218 116.9 \\ 220 117.2 \\ 222 117.4 \\ 224 117.7 \\ 226 118.0 \\ 228 118.1 \\ 230 118.2 \\ 232 118.5 \\ 234 118.7 \\ 236 118.7 \\ 238 119.0 \\ 240 119.2 \\ 242 119.5 \\ 244 119.7 \\ 246 120.0 \\ 248 120.1 \\ 250 120.3 \\ 252 120.5 \\ 254 120.6 \\ 256 120.8 \\ 258 121.0 \\ 260 121.3 \\ 262 121.5 \\ 264 121.8 \\ 266 121.9 \\ 268 122.1 \\ 270 122.2 \\ 272 122.4 \\ 274 122.7 \\ 276 122.7 \\ 278 123.0 \\ 280 123.1 \\ 282 123.3 \\ 284 123.5 \\ 286 123.5 \\ 288 123.7 \\ 290 123.9 \\ 292 124.2 \\ 294 124.4 \\ 296 124.4 \\ 298 124.7 \\ 300 124.8 \\ 
         }{\dataTheoOptSept}

        \pgfplotstableread[row sep=\\]{
        X Y\\ 22 100.7 \\ 24 112.7 \\ 26 120.7 \\ 28 129.1 \\ 30 133.6 \\ 32 141.6 \\ 34 147.5 \\ 36 150.7 \\ 38 153.8 \\ 40 161.8 \\ 42 165.9 \\ 44 173.9 \\ 46 176.3 \\ 48 179.8 \\ 50 181.8 \\ 52 183.7 \\ 54 191.7 \\ 56 194.5 \\ 58 196.2 \\ 60 197.7 \\ 62 205.7 \\ 64 200.7 \\ 66 208.7 \\ 68 210.8 \\ 70 212.1 \\ 72 213.3 \\ 74 214.6 \\ 76 215.7 \\ 78 223.7 \\ 80 225.4 \\ 82 226.5 \\ 84 227.5 \\ 86 228.5 \\ 88 229.6 \\ 90 230.5 \\ 92 231.3 \\ 94 239.3 \\ 96 233.1 \\ 98 241.1 \\ 100 234.8 \\ 102 242.8 \\ 104 244.0 \\ 106 244.8 \\ 108 237.9 \\ 110 245.9 \\ 112 247.0 \\ 114 247.7 \\ 116 248.4 \\ 118 249.1 \\ 120 249.7 \\ 122 250.4 \\ 124 251.1 \\ 126 251.7 \\ 128 252.3 \\ 130 252.9 \\ 132 253.5 \\ 134 254.0 \\ 136 246.9 \\ 138 247.5 \\ 140 248.0 \\ 142 248.5 \\ 144 249.0 \\ 146 257.0 \\ 148 257.9 \\ 150 258.4 \\ 152 258.6 \\ 154 259.4 \\ 156 252.1 \\ 158 252.5 \\ 160 253.0 \\ 162 261.0 \\ 164 253.9 \\ 166 246.6 \\ 168 247.0 \\ 170 255.0 \\ 172 255.7 \\ 174 256.1 \\ 176 256.5 \\ 178 257.0 \\ 180 249.5 \\ 182 249.9 \\ 184 250.3 \\ 186 258.3 \\ 188 251.1 \\ 190 251.5 \\ 192 251.9 \\ 194 252.3 \\ 196 252.7 \\ 198 253.0 \\ 200 245.5 \\ 202 245.9 \\ 204 246.2 \\ 206 246.6 \\ 208 247.0 \\ 210 247.3 \\ 212 247.6 \\ 214 248.0 \\ 216 232.6 \\ 218 240.6 \\ 220 241.1 \\ 222 241.4 \\ 224 241.7 \\ 226 242.0 \\ 228 242.4 \\ 230 234.8 \\ 232 235.1 \\ 234 235.4 \\ 236 220.0 \\ 238 228.0 \\ 240 228.4 \\ 242 228.7 \\ 244 229.0 \\ 246 229.3 \\ 248 229.6 \\ 250 229.8 \\ 252 214.4 \\ 254 214.6 \\ 256 214.9 \\ 258 215.2 \\ 260 215.4 \\ 262 215.7 \\ 264 216.0 \\ 266 216.2 \\ 268 216.5 \\ 270 216.8 \\ 272 209.1 \\ 274 209.4 \\ 276 209.6 \\ 278 202.0 \\ 280 202.2 \\ 282 202.4 \\ 284 202.7 \\ 286 195.1 \\ 288 195.3 \\ 290 195.5 \\ 292 195.8 \\ 294 196.0 \\ 296 196.2 \\ 298 188.5 \\ 300 188.8 \\ 
         }{\dataTheoOptHuit}

        \pgfplotstableread[row sep=\\]{
        X Y\\ 24 120.9 \\ 26 133.5 \\ 28 142.5 \\ 30 151.9 \\ 32 160.9 \\ 34 168.1 \\ 36 177.2 \\ 38 181.2 \\ 40 186.6 \\ 42 195.6 \\ 44 200.3 \\ 46 203.1 \\ 48 212.1 \\ 50 216.0 \\ 52 218.3 \\ 54 227.3 \\ 56 230.6 \\ 58 232.7 \\ 60 241.7 \\ 62 244.6 \\ 64 246.4 \\ 66 255.4 \\ 68 257.9 \\ 70 259.4 \\ 72 268.5 \\ 74 270.0 \\ 76 272.2 \\ 78 273.5 \\ 80 282.5 \\ 82 283.8 \\ 84 285.8 \\ 86 286.9 \\ 88 288.1 \\ 90 297.1 \\ 92 298.9 \\ 94 299.9 \\ 96 301.0 \\ 98 310.0 \\ 100 303.0 \\ 102 312.0 \\ 104 313.5 \\ 106 322.5 \\ 108 315.3 \\ 110 316.2 \\ 112 325.2 \\ 114 326.5 \\ 116 327.3 \\ 118 328.1 \\ 120 337.1 \\ 122 338.4 \\ 124 338.7 \\ 126 339.9 \\ 128 340.6 \\ 130 341.4 \\ 132 342.0 \\ 134 351.0 \\ 136 343.1 \\ 138 344.1 \\ 140 344.8 \\ 142 345.4 \\ 144 354.4 \\ 146 355.4 \\ 148 356.0 \\ 150 365.0 \\ 152 365.6 \\ 154 366.6 \\ 156 358.4 \\ 158 367.4 \\ 160 368.3 \\ 162 368.8 \\ 164 369.4 \\ 166 369.6 \\ 168 370.5 \\ 170 371.0 \\ 172 371.5 \\ 174 380.5 \\ 176 381.3 \\ 178 381.8 \\ 180 373.5 \\ 182 382.5 \\ 184 383.3 \\ 186 383.7 \\ 188 384.2 \\ 190 384.7 \\ 192 385.1 \\ 194 385.6 \\ 196 386.0 \\ 198 395.0 \\ 200 386.9 \\ 202 387.3 \\ 204 396.3 \\ 206 388.2 \\ 208 388.6 \\ 210 397.6 \\ 212 398.2 \\ 214 398.6 \\ 216 390.2 \\ 218 390.6 \\ 220 391.0 \\ 222 400.0 \\ 224 400.6 \\ 226 400.9 \\ 228 401.3 \\ 230 401.5 \\ 232 402.1 \\ 234 402.4 \\ 236 394.0 \\ 238 394.3 \\ 240 394.7 \\ 242 395.0 \\ 244 395.3 \\ 246 404.3 \\ 248 404.9 \\ 250 405.2 \\ 252 396.7 \\ 254 397.0 \\ 256 397.4 \\ 258 397.7 \\ 260 406.7 \\ 262 407.2 \\ 264 407.5 \\ 266 407.6 \\ 268 408.1 \\ 270 408.4 \\ 272 399.9 \\ 274 408.9 \\ 276 409.3 \\ 278 400.8 \\ 280 401.1 \\ 282 410.1 \\ 284 410.4 \\ 286 401.8 \\ 288 402.2 \\ 290 402.5 \\ 292 402.8 \\ 294 403.1 \\ 296 403.3 \\ 298 403.6 \\ 300 403.9 \\ 
         }{\dataTheoOptNeuf}

        \pgfplotstableread[row sep=\\]{
        X Y\\ 21 40.28 \\ 24 41.38 \\ 27 43.27 \\ 30 44.19 \\ 33 45.86 \\ 36 47.50 \\ 39 48.23 \\ 42 49.64 \\ 45 50.21 \\ 48 51.50 \\ 51 52.78 \\ 54 53.27 \\ 57 53.81 \\ 60 54.91 \\ 63 55.95 \\ 66 56.42 \\ 69 57.40 \\ 72 57.74 \\ 75 58.16 \\ 78 59.07 \\ 81 59.43 \\ 84 60.28 \\ 87 61.15 \\ 90 61.51 \\ 93 61.75 \\ 96 62.59 \\ 99 63.34 \\ 102 63.66 \\ 105 63.87 \\ 108 64.21 \\ 111 64.91 \\ 114 65.58 \\ 117 65.83 \\ 120 66.48 \\ 123 66.75 \\ 126 66.94 \\ 129 67.58 \\ 132 67.81 \\ 135 68.15 \\ 138 68.33 \\ 141 68.91 \\ 144 69.48 \\ 147 69.64 \\ 150 70.19 \\ 153 70.39 \\ 156 70.61 \\ 159 71.13 \\ 162 71.28 \\ 165 71.81 \\ 168 72.05 \\ 171 72.20 \\ 174 72.69 \\ 177 72.84 \\ 180 73.04 \\ 183 73.51 \\ 186 73.97 \\ 189 74.15 \\ 192 74.28 \\ 195 74.76 \\ 198 74.89 \\ 201 75.06 \\ 204 75.49 \\ 207 75.66 \\ 210 76.08 \\ 213 76.22 \\ 216 76.41 \\ 219 76.52 \\ 222 76.93 \\ 225 77.04 \\ 228 77.45 \\ 231 77.59 \\ 234 77.71 \\ 237 77.90 \\ 240 78.29 \\ 243 78.39 \\ 246 78.76 \\ 249 78.87 \\ 252 79.05 \\ 255 79.41 \\ 258 79.51 \\ 261 79.86 \\ 264 79.96 \\ 267 80.08 \\ 270 80.44 \\ 273 80.56 \\ 276 80.90 \\ 279 81.03 \\ 282 81.12 \\ 285 81.46 \\ 288 81.61 \\ 291 81.70 \\ 294 82.02 \\ 297 82.11 \\ 300 82.23 \\ 303 82.54 \\ 306 82.66 \\ 309 82.74 \\ 312 82.86 \\ 315 83.17 \\ 318 83.25 \\ 321 83.37 \\ 324 83.67 \\ 327 83.75 \\ 330 84.07 \\ 333 84.16 \\ 336 84.29 \\ 339 84.36 \\ 342 84.65 \\ 345 84.73 \\ 348 84.87 \\ 351 85.15 \\ 354 85.23 \\ 357 85.50 \\ 360 85.57 \\ 363 85.65 \\ 366 85.94 \\ 369 86.02 \\ 372 86.11 \\ 375 86.37 \\ 378 86.31 \\ 381 86.58 \\ 384 86.64 \\ 387 86.90 \\ 390 86.97 \\ 393 87.07 \\ 396 87.32 \\ 399 87.43 \\ 402 87.50 \\ 405 87.75 \\ 408 87.81 \\ 411 88.07 \\ 414 88.15 \\ 417 88.21 \\ 420 88.30 \\ 423 88.54 \\ 426 88.64 \\ 429 88.70 \\ 432 88.94 \\ 435 89.02 \\ 438 89.08 \\ 441 89.31 \\ 444 89.40 \\ 447 89.46 \\ 450 89.70 \\ 453 89.76 \\ 456 89.84 \\ 459 90.07 \\ 462 90.16 \\ 465 90.22 \\ 468 90.44 \\ 471 90.51 \\ 474 90.57 \\ 477 90.65 \\ 480 90.87 \\ 483 90.92 \\ 486 91.02 \\ 489 91.07 \\ 492 91.28 \\ 495 91.34 \\ 498 91.44 \\ 
         }{\dataTheoOptCinqRQC}

        \pgfplotstableread[row sep=\\]{
        X Y\\21 53.44 \\ 24 52.85 \\ 27 54.99 \\ 30 53.11 \\ 33 54.80 \\ 36 56.47 \\ 39 57.24 \\ 42 58.69 \\ 45 60.20 \\ 48 61.50 \\ 51 62.80 \\ 54 63.32 \\ 57 63.87 \\ 60 65.01 \\ 63 66.79 \\ 66 67.28 \\ 69 68.27 \\ 72 68.63 \\ 75 69.71 \\ 78 70.62 \\ 81 70.99 \\ 84 71.85 \\ 87 72.73 \\ 90 73.10 \\ 93 73.90 \\ 96 74.22 \\ 99 74.99 \\ 102 75.83 \\ 105 76.06 \\ 108 76.41 \\ 111 77.12 \\ 114 77.80 \\ 117 78.53 \\ 120 78.74 \\ 123 79.46 \\ 126 80.09 \\ 129 80.31 \\ 132 80.97 \\ 135 80.91 \\ 138 81.51 \\ 141 82.10 \\ 144 82.67 \\ 147 82.84 \\ 150 83.39 \\ 153 83.98 \\ 156 84.21 \\ 159 84.73 \\ 162 84.89 \\ 165 85.43 \\ 168 85.68 \\ 171 86.18 \\ 174 86.33 \\ 177 86.83 \\ 180 87.04 \\ 183 87.51 \\ 186 87.97 \\ 189 88.17 \\ 192 88.62 \\ 195 88.79 \\ 198 89.23 \\ 201 89.42 \\ 204 89.85 \\ 207 90.03 \\ 210 90.45 \\ 213 90.59 \\ 216 90.79 \\ 219 91.20 \\ 222 91.61 \\ 225 91.73 \\ 228 92.14 \\ 231 92.29 \\ 234 92.69 \\ 237 92.89 \\ 240 93.00 \\ 243 93.38 \\ 246 93.75 \\ 249 93.86 \\ 252 94.05 \\ 255 94.41 \\ 258 94.77 \\ 261 94.88 \\ 264 95.23 \\ 267 95.36 \\ 270 95.72 \\ 273 95.84 \\ 276 96.18 \\ 279 96.32 \\ 282 96.65 \\ 285 96.76 \\ 288 96.92 \\ 291 97.24 \\ 294 97.56 \\ 297 97.66 \\ 300 97.79 \\ 303 98.10 \\ 306 98.23 \\ 309 98.54 \\ 312 98.66 \\ 315 98.96 \\ 318 99.05 \\ 321 99.39 \\ 324 99.48 \\ 327 99.78 \\ 330 99.88 \\ 333 100.1 \\ 336 100.3 \\ 339 100.4 \\ 342 100.6 \\ 345 100.9 \\ 348 100.9 \\ 351 101.2 \\ 354 101.4 \\ 357 101.5 \\ 360 101.8 \\ 363 101.9 \\ 366 102.2 \\ 369 102.2 \\ 372 102.5 \\ 375 102.6 \\ 378 102.7 \\ 381 103.0 \\ 384 103.1 \\ 387 103.3 \\ 390 103.4 \\ 393 103.7 \\ 396 103.8 \\ 399 103.9 \\ 402 104.1 \\ 405 104.2 \\ 408 104.4 \\ 411 104.5 \\ 414 104.8 \\ 417 104.9 \\ 420 105.1 \\ 423 105.2 \\ 426 105.3 \\ 429 105.5 \\ 432 105.8 \\ 435 105.9 \\ 438 105.9 \\ 441 106.1 \\ 444 106.2 \\ 447 106.5 \\ 450 106.5 \\ 453 106.8 \\ 456 106.9 \\ 459 107.1 \\ 462 107.2 \\ 465 107.2 \\ 468 107.5 \\ 471 107.5 \\ 474 107.8 \\ 477 107.8 \\ 480 107.9 \\ 483 108.1 \\ 486 108.2 \\ 489 108.4 \\ 492 108.5 \\ 495 108.7 \\ 498 108.8 \\ 
         }{\dataTheoOptSixRQC}

        \pgfplotstableread[row sep=\\]{
        X Y\\ 21 66.07 \\ 24 66.61 \\ 27 71.35 \\ 30 69.52 \\ 33 67.86 \\ 36 70.84 \\ 39 66.23 \\ 42 68.66 \\ 45 70.18 \\ 48 71.49 \\ 51 72.81 \\ 54 74.13 \\ 57 74.70 \\ 60 75.83 \\ 63 76.92 \\ 66 78.12 \\ 69 79.13 \\ 72 80.15 \\ 75 80.60 \\ 78 81.54 \\ 81 82.52 \\ 84 83.39 \\ 87 84.28 \\ 90 84.67 \\ 93 86.02 \\ 96 86.35 \\ 99 87.12 \\ 102 87.97 \\ 105 88.70 \\ 108 89.06 \\ 111 89.77 \\ 114 90.46 \\ 117 90.73 \\ 120 91.40 \\ 123 92.14 \\ 126 92.77 \\ 129 93.42 \\ 132 93.68 \\ 135 94.04 \\ 138 94.64 \\ 141 95.23 \\ 144 95.81 \\ 147 96.38 \\ 150 96.93 \\ 153 97.15 \\ 156 97.76 \\ 159 98.29 \\ 162 98.81 \\ 165 98.99 \\ 168 99.26 \\ 171 99.76 \\ 174 100.2 \\ 177 100.7 \\ 180 100.9 \\ 183 101.4 \\ 186 101.9 \\ 189 102.1 \\ 192 102.5 \\ 195 103.0 \\ 198 103.5 \\ 201 103.7 \\ 204 104.1 \\ 207 104.3 \\ 210 104.7 \\ 213 105.1 \\ 216 105.4 \\ 219 105.8 \\ 222 106.2 \\ 225 106.6 \\ 228 106.7 \\ 231 107.1 \\ 234 107.3 \\ 237 107.5 \\ 240 107.9 \\ 243 108.2 \\ 246 108.6 \\ 249 109.0 \\ 252 108.9 \\ 255 109.3 \\ 258 109.7 \\ 261 110.0 \\ 264 110.4 \\ 267 110.5 \\ 270 110.9 \\ 273 111.2 \\ 276 111.3 \\ 279 111.7 \\ 282 112.1 \\ 285 112.2 \\ 288 112.3 \\ 291 112.7 \\ 294 113.0 \\ 297 113.3 \\ 300 113.4 \\ 303 113.8 \\ 306 113.9 \\ 309 114.2 \\ 312 114.3 \\ 315 114.6 \\ 318 114.9 \\ 321 115.1 \\ 324 115.4 \\ 327 115.7 \\ 330 115.8 \\ 333 116.1 \\ 336 116.2 \\ 339 116.5 \\ 342 116.6 \\ 345 116.9 \\ 348 117.0 \\ 351 117.3 \\ 354 117.6 \\ 357 117.7 \\ 360 118.0 \\ 363 118.2 \\ 366 118.3 \\ 369 118.6 \\ 372 118.7 \\ 375 119.0 \\ 378 119.1 \\ 381 119.4 \\ 384 119.5 \\ 387 119.7 \\ 390 120.0 \\ 393 120.1 \\ 396 120.3 \\ 399 120.5 \\ 402 120.7 \\ 405 121.0 \\ 408 121.0 \\ 411 121.3 \\ 414 121.4 \\ 417 121.6 \\ 420 121.9 \\ 423 122.0 \\ 426 122.1 \\ 429 122.3 \\ 432 122.5 \\ 435 122.6 \\ 438 122.9 \\ 441 123.1 \\ 444 123.2 \\ 447 123.4 \\ 450 123.5 \\ 453 123.7 \\ 456 123.8 \\ 459 124.0 \\ 462 124.1 \\ 465 124.4 \\ 468 124.6 \\ 471 124.7 \\ 474 124.9 \\ 477 125.0 \\ 480 125.2 \\ 483 125.4 \\ 486 125.5 \\ 489 125.6 \\ 492 125.8 \\ 495 126.0 \\ 498 126.1 \\ 
         }{\dataTheoOptSeptRQC}

        \pgfplotstableread[row sep=\\]{
        X Y\\ 21 75.90 \\ 24 79.79 \\ 27 85.25 \\ 30 88.49 \\ 33 87.11 \\ 36 90.73 \\ 39 86.79 \\ 42 89.62 \\ 45 91.37 \\ 48 87.27 \\ 51 88.70 \\ 54 84.17 \\ 57 84.75 \\ 60 86.65 \\ 63 87.74 \\ 66 88.95 \\ 69 89.98 \\ 72 91.02 \\ 75 92.12 \\ 78 93.05 \\ 81 94.04 \\ 84 94.92 \\ 87 95.82 \\ 90 96.78 \\ 93 97.59 \\ 96 98.46 \\ 99 99.24 \\ 102 100.0 \\ 105 100.8 \\ 108 101.2 \\ 111 101.9 \\ 114 103.0 \\ 117 103.3 \\ 120 104.4 \\ 123 104.7 \\ 126 105.4 \\ 129 106.1 \\ 132 106.7 \\ 135 106.7 \\ 138 107.7 \\ 141 108.3 \\ 144 108.9 \\ 147 109.5 \\ 150 110.0 \\ 153 110.6 \\ 156 110.9 \\ 159 111.8 \\ 162 112.3 \\ 165 112.8 \\ 168 113.1 \\ 171 113.6 \\ 174 114.1 \\ 177 114.6 \\ 180 114.8 \\ 183 115.3 \\ 186 116.1 \\ 189 116.3 \\ 192 116.8 \\ 195 117.3 \\ 198 117.7 \\ 201 117.9 \\ 204 118.4 \\ 207 118.8 \\ 210 119.3 \\ 213 119.7 \\ 216 119.9 \\ 219 120.3 \\ 222 120.8 \\ 225 121.2 \\ 228 121.6 \\ 231 121.7 \\ 234 122.1 \\ 237 122.4 \\ 240 122.7 \\ 243 123.1 \\ 246 123.5 \\ 249 123.9 \\ 252 124.1 \\ 255 124.4 \\ 258 124.8 \\ 261 125.2 \\ 264 125.5 \\ 267 125.9 \\ 270 126.3 \\ 273 126.4 \\ 276 126.7 \\ 279 127.1 \\ 282 127.5 \\ 285 127.8 \\ 288 128.0 \\ 291 128.3 \\ 294 128.6 \\ 297 129.0 \\ 300 129.1 \\ 303 129.4 \\ 306 129.5 \\ 309 129.9 \\ 312 130.2 \\ 315 130.5 \\ 318 130.8 \\ 321 131.0 \\ 324 131.3 \\ 327 131.6 \\ 330 131.7 \\ 333 132.0 \\ 336 132.1 \\ 339 132.4 \\ 342 132.7 \\ 345 133.0 \\ 348 133.1 \\ 351 133.4 \\ 354 133.7 \\ 357 134.0 \\ 360 134.3 \\ 363 134.5 \\ 366 134.6 \\ 369 134.9 \\ 372 135.2 \\ 375 135.5 \\ 378 135.4 \\ 381 135.7 \\ 384 136.0 \\ 387 136.2 \\ 390 136.5 \\ 393 136.6 \\ 396 136.8 \\ 399 137.0 \\ 402 137.2 \\ 405 137.5 \\ 408 137.7 \\ 411 138.0 \\ 414 138.2 \\ 417 138.3 \\ 420 138.6 \\ 423 138.8 \\ 426 138.9 \\ 429 139.2 \\ 432 139.4 \\ 435 139.5 \\ 438 139.7 \\ 441 140.0 \\ 444 140.1 \\ 447 140.3 \\ 450 140.6 \\ 453 140.8 \\ 456 140.9 \\ 459 141.1 \\ 462 141.2 \\ 465 141.4 \\ 468 141.6 \\ 471 141.9 \\ 474 142.1 \\ 477 142.2 \\ 480 142.4 \\ 483 142.6 \\ 486 142.7 \\ 489 142.9 \\ 492 143.0 \\ 495 143.2 \\ 498 143.3 \\ 
         }{\dataTheoOptHuitRQC}

        \pgfplotstableread[row sep=\\]{
        X Y\\ 21 82.93 \\ 24 91.61 \\ 27 98.28 \\ 30 102.1 \\ 33 107.2 \\ 36 111.8 \\ 39 109.2 \\ 42 112.6 \\ 45 114.7 \\ 48 117.6 \\ 51 120.4 \\ 54 115.5 \\ 57 109.4 \\ 60 111.5 \\ 63 113.5 \\ 66 107.0 \\ 69 108.7 \\ 72 109.8 \\ 75 103.0 \\ 78 104.5 \\ 81 105.5 \\ 84 106.4 \\ 87 107.9 \\ 90 108.3 \\ 93 109.6 \\ 96 110.5 \\ 99 111.3 \\ 102 112.2 \\ 105 112.9 \\ 108 113.3 \\ 111 114.5 \\ 114 115.2 \\ 117 116.0 \\ 120 117.1 \\ 123 117.4 \\ 126 118.5 \\ 129 119.1 \\ 132 119.8 \\ 135 119.8 \\ 138 120.4 \\ 141 121.4 \\ 144 122.0 \\ 147 123.0 \\ 150 123.5 \\ 153 124.1 \\ 156 124.4 \\ 159 125.3 \\ 162 125.8 \\ 165 126.4 \\ 168 126.6 \\ 171 127.5 \\ 174 128.0 \\ 177 128.5 \\ 180 128.7 \\ 183 129.5 \\ 186 130.0 \\ 189 130.5 \\ 192 131.0 \\ 195 131.5 \\ 198 131.9 \\ 201 132.4 \\ 204 132.9 \\ 207 133.1 \\ 210 133.8 \\ 213 134.3 \\ 216 134.5 \\ 219 134.9 \\ 222 135.3 \\ 225 135.7 \\ 228 136.1 \\ 231 136.6 \\ 234 137.0 \\ 237 137.2 \\ 240 137.6 \\ 243 138.0 \\ 246 138.6 \\ 249 139.0 \\ 252 139.2 \\ 255 139.6 \\ 258 139.9 \\ 261 140.3 \\ 264 140.9 \\ 267 141.0 \\ 270 141.4 \\ 273 141.8 \\ 276 142.1 \\ 279 142.5 \\ 282 142.8 \\ 285 143.2 \\ 288 143.4 \\ 291 143.7 \\ 294 144.0 \\ 297 144.6 \\ 300 144.7 \\ 303 145.0 \\ 306 145.4 \\ 309 145.7 \\ 312 145.8 \\ 315 146.4 \\ 318 146.7 \\ 321 146.8 \\ 324 147.1 \\ 327 147.4 \\ 330 147.7 \\ 333 148.0 \\ 336 148.2 \\ 339 148.5 \\ 342 148.8 \\ 345 149.1 \\ 348 149.2 \\ 351 149.5 \\ 354 149.8 \\ 357 150.1 \\ 360 150.5 \\ 363 150.8 \\ 366 150.9 \\ 369 151.2 \\ 372 151.5 \\ 375 151.8 \\ 378 151.9 \\ 381 152.2 \\ 384 152.4 \\ 387 152.7 \\ 390 153.0 \\ 393 153.2 \\ 396 153.5 \\ 399 153.6 \\ 402 153.9 \\ 405 154.1 \\ 408 154.4 \\ 411 154.6 \\ 414 154.9 \\ 417 155.2 \\ 420 155.3 \\ 423 155.5 \\ 426 155.8 \\ 429 156.0 \\ 432 156.3 \\ 435 156.4 \\ 438 156.6 \\ 441 157.0 \\ 444 157.1 \\ 447 157.3 \\ 450 157.6 \\ 453 157.8 \\ 456 157.9 \\ 459 158.1 \\ 462 158.4 \\ 465 158.6 \\ 468 158.8 \\ 471 158.9 \\ 474 159.3 \\ 477 159.4 \\ 480 159.6 \\ 483 159.8 \\ 486 159.9 \\ 489 160.1 \\ 492 160.4 \\ 495 160.6 \\ 498 160.7 \\ 
         }{\dataTheoOptNeufRQC}

        \pgfplotstableread[row sep=\\]{
          X Y \\ 20 30.51 \\ 22 31.29 \\ 24 32.09 \\ 26 32.82 \\ 28 33.54 \\ 30 34.18 \\ 32 34.75 \\ 34 35.31 \\ 36 35.87 \\ 38 36.38 \\ 40 36.84 \\ 42 37.29 \\ 44 37.73 \\ 46 38.21 \\ 48 38.59\\50 39.01\\52 39.40\\54 39.72\\ 56 40.08\\58 40.42\\60 40.76\\62 41.06\\64 41.42\\66 41.74\\68 42.03\\70 42.32\\72 42.67\\74 42.96\\76 43.25\\78 43.53\\80 43.80\\82 44.14\\84 44.36\\86 44.59\\88 44.93\\90 45.15\\92 45.41\\94 45.64\\96 45.86\\
        }\dataExpOptTrois

        \pgfplotstableread[row sep=\\]{
          X Y \\ 20 33.68 \\ 22 34.60 \\ 24 35.53 \\ 26 36.47 \\ 28 37.34 \\ 30 38.24 \\ 32 39.11 \\ 34 39.92 \\ 36 40.72 \\ 38 41.50 \\ 40 42.28 \\42 43.05\\44 43.78\\46 44.70\\48 45.32\\ 50 46.13\\52 46.81\\54 47.45\\
        }\dataExpOptQuatre

        \pgfplotstableread[row sep=\\]{
          X Y\\ 20 35.41\\ 22 37.14\\ 24 38.80\\ 26 40.34\\ 28 41.70\\ 30 43.47\\ 32 44.89\\ 34 46.18\\ 36 47.30\\ 
        }\dataExpOptCinq

        \pgfplotstableread[row sep=\\]{
        X Y \\ 20 45.59\\ 22 48.85\\ 24 51.62\\ 26 54.66\\ 28 57.08\\ 30 59.07\\ 32 60.93\\ 34 62.67\\ 36 64.65\\ 38 66.29\\ 40 67.76\\ 42 69.15\\ 44 70.87\\ 46 72.00\\ 48 73.21\\ 50 74.52\\ 52 75.70\\ 54 76.77\\ 56 78.13\\ 58 78.93\\ 60 79.89\\ 62 81.12\\ 64 81.90\\ 66 82.77\\ 68 83.61\\ 70 84.70\\ 72 85.39\\ 74 86.21\\ 76 86.97\\ 78 87.70\\ 80 88.56\\ 82 89.26\\ 84 89.94\\ 86 90.69\\ 88 91.38\\ 90 92.01\\ 92 92.64\\ 94 93.15\\ 96 93.75\\ 98 94.34\\ 100 94.99\\ 102 95.55\\ 104 96.10\\ 106 96.75\\ 108 97.28\\ 110 97.80\\ 112 98.31\\ 114 98.82\\ 116 99.27\\ 118 99.76\\ 120 100.29\\ 122 100.77\\ 124 101.26\\ 126 101.72\\ 128 102.22\\ 130 102.54\\ 132 103.13\\ 134 103.46\\ 136 103.88\\ 138 104.30\\ 140 104.72\\ 142 105.12\\ 144 105.52\\ 146 105.92\\ 148 106.31\\ 150 106.69\\ 152 107.11\\ 154 107.49\\ 156 107.90\\ 158 108.27\\ 160 108.63\\ 162 108.99\\ 164 109.28\\ 166 109.65\\ 168 110.00\\ 170 110.34\\ 172 110.67\\ 174 111.01\\ 176 111.34\\ 178 111.65\\ 180 111.99\\ 182 112.31\\ 184 112.63\\ 186 112.88\\ 188 113.19\\ 190 113.50\\ 192 113.80\\ 194 114.16\\ 196 114.45\\ 198 114.78\\ 200 115.08\\ 
        }{\dataTheoSept}

        \pgfplotstableread[row sep=\\]{
          X Y\\20 42.69\\ 22 45.51\\ 24 47.88\\ 26 49.91\\ 28 52.05\\ 30 53.77\\ 32 55.87\\ 34 57.36\\ 36 58.65\\ 38 60.09\\ 40 61.36\\ 42 62.57\\ 44 63.72\\ 46 65.07\\ 48 66.12\\ 50 67.27\\ 52 67.99\\ 54 69.24\\ 56 70.13\\ 58 70.83\\ 60 71.67\\ 62 72.49\\ 64 73.45\\ 66 74.21\\ 68 74.95\\ 70 75.66\\ 72 76.27\\ 74 77.01\\ 76 77.67\\ 78 78.32\\ 80 78.86\\ 82 79.48\\ 84 80.08\\ 86 80.76\\ 88 81.37\\ 90 81.93\\ 92 82.48\\ 94 82.94\\ 96 83.47\\ 98 83.99\\ 100 84.57\\ 102 85.07\\ 104 85.55\\ 106 85.96\\ 108 86.43\\ 110 86.89\\ 112 87.35\\ 114 87.79\\ 116 88.35\\ 118 88.78\\ 120 89.26\\ 122 89.68\\ 124 90.11\\ 126 90.52\\ 128 90.97\\ 130 91.38\\ 132 91.76\\ 134 92.19\\ 136 92.56\\ 138 92.93\\ 140 93.29\\ 142 93.64\\ 144 93.99\\ 146 94.33\\ 148 94.67\\ 150 95.01\\ 152 95.51\\ 154 95.83\\ 156 96.20\\ 158 96.52\\ 160 96.83\\ 162 97.14\\ 164 97.51\\ 166 97.83\\ 168 98.13\\ 170 98.42\\ 172 98.71\\ 174 99.00\\ 176 99.28\\ 178 99.66\\ 180 99.96\\ 182 100.23\\ 184 100.51\\ 186 100.83\\ 188 101.10\\ 190 101.36\\ 192 101.62\\ 194 101.93\\ 196 102.18\\ 198 102.47\\ 200 102.72\\ 
         }{\dataTheoSix}
          
        \pgfplotstableread[row sep=\\]{
        X Y\\  20 39.09\\ 22 41.49\\ 24 43.48\\ 26 45.15\\ 28 46.47\\ 30 48.41\\ 32 49.74\\ 34 51.00\\ 36 52.12\\ 38 53.35\\ 40 54.43\\ 42 55.45\\ 44 56.42\\ 46 57.28\\ 48 58.18\\ 50 58.88\\ 52 59.79\\ 54 60.59\\ 56 61.37\\ 58 61.97\\ 60 62.70\\ 62 63.40\\ 64 64.25\\ 66 64.91\\ 68 65.55\\ 70 66.17\\ 72 66.92\\ 74 67.55\\ 76 68.11\\ 78 68.66\\ 80 69.34\\ 82 69.86\\ 84 70.37\\ 86 70.94\\ 88 71.47\\ 90 71.94\\ 92 72.40\\ 94 72.97\\ 96 73.41\\ 98 73.84\\ 100 74.34\\ 102 74.75\\ 104 75.16\\ 106 75.66\\ 108 76.05\\ 110 76.43\\ 112 76.80\\ 114 77.17\\ 116 77.66\\ 118 78.02\\ 120 78.42\\ 122 78.77\\ 124 79.13\\ 126 79.47\\ 128 79.85\\ 130 80.20\\ 132 80.51\\ 134 80.88\\ 136 81.19\\ 138 81.49\\ 140 81.79\\ 142 82.08\\ 144 82.37\\ 146 82.65\\ 148 82.94\\ 150 83.21\\ 152 83.66\\ 154 83.93\\ 156 84.24\\ 158 84.50\\ 160 84.76\\ 162 85.02\\ 164 85.33\\ 166 85.61\\ 168 85.85\\ 170 86.10\\ 172 86.34\\ 174 86.58\\ 176 86.81\\ 178 87.14\\ 180 87.39\\ 182 87.62\\ 184 87.85\\ 186 88.13\\ 188 88.35\\ 190 88.56\\ 192 88.78\\ 194 89.05\\ 196 89.26\\ 198 89.50\\ 200 89.71\\ 
         }{\dataTheoCinq}
         
        \pgfplotstableread[row sep=\\]{
        X Y\\ 10 21.82\\ 12 25.90\\ 14 28.38\\ 16 30.91\\ 18 32.91\\ 20 34.69\\ 22 36.14\\ 24 37.81\\ 26 39.18\\ 28 40.27\\ 30 41.47\\ 32 42.59\\ 34 43.64\\ 36 44.60\\ 38 45.64\\ 40 46.54\\ 42 47.39\\ 44 48.19\\ 46 49.22\\ 48 49.95\\ 50 50.80\\ 52 51.54\\ 54 52.18\\ 56 52.80\\ 58 53.53\\ 60 54.10\\ 62 54.66\\ 64 55.37\\ 66 55.89\\ 68 56.39\\ 70 56.88\\ 72 57.51\\ 74 58.02\\ 76 58.47\\ 78 58.90\\ 80 59.46\\ 82 59.87\\ 84 60.28\\ 86 60.75\\ 88 61.18\\ 90 61.55\\ 92 61.92\\ 94 62.39\\ 96 62.74\\ 98 63.09\\ 100 63.49\\102 63.82\\ 104 64.14\\ 106 64.56\\ 108 64.87\\ 110 65.18\\ 112 65.47\\ 114 65.77\\ 116 66.18\\ 118 66.46\\ 120 66.80\\ 122 67.07\\ 124 67.37\\ 126 67.63\\ 128 67.95\\ 130 68.23\\ 132 68.48\\ 134 68.78\\ 136 69.03\\ 138 69.27\\ 140 69.51\\ 142 69.74\\ 144 69.97\\ 146 70.20\\ 148 70.42\\ 150 70.64\\ 152 71.03\\ 154 71.25\\ 156 71.50\\ 158 71.71\\ 160 71.92\\ 162 72.12\\ 164 72.39\\ 166 72.61\\ 168 72.81\\ 170 73.00\\ 172 73.19\\ 174 73.38\\ 176 73.57\\ 178 73.86\\ 180 74.06\\ 182 74.24\\ 184 74.42\\ 186 74.65\\ 188 74.83\\ 190 75.00\\ 192 75.17\\ 194 75.40\\ 196 75.57\\ 198 75.77\\ 200 75.93\\ 
        }{\dataTheoQuatre}

        \pgfplotstableread[row sep=\\]{
        X Y\\
        10 19.56 \\ 12 21.94 \\ 14 23.90 \\ 16 25.94 \\ 18 27.50 \\ 20 28.89 \\ 22 30.05 \\ 24 31.38 \\ 26 32.45 \\ 28 33.70 \\ 30 34.61 \\ 32 35.45 \\ 34 36.24 \\ 36 37.31 \\ 38 38.10 \\ 40 38.76 \\ 42 39.39 \\ 44 39.98 \\ 46 40.80 \\ 48 41.34 \\ 50 42.01 \\ 52 42.57 \\ 54 43.05 \\ 56 43.51 \\ 58 44.08 \\ 60 44.50 \\ 62 44.91 \\ 64 45.49 \\ 66 45.87 \\ 68 46.24 \\ 70 46.61 \\ 72 47.11 \\ 74 47.50 \\ 76 47.84 \\ 78 48.16 \\ 80 48.61 \\ 82 48.92 \\ 84 49.22 \\ 86 49.59 \\ 88 49.92 \\ 90 50.20 \\ 92 50.47 \\ 94 50.85 \\ 96 51.11 \\ 98 51.37 \\ 100 51.69 \\ 102 51.93\\ 104 52.17\\ 106 52.51\\ 108 52.74\\ 110 52.97\\ 112 53.19\\ 114 53.41\\ 116 53.75\\ 118 53.96\\ 120 54.23\\ 122 54.43\\ 124 54.66\\ 126 54.86\\ 128 55.10\\ 130 55.32\\ 132 55.51\\ 134 55.75\\ 136 55.93\\ 138 56.11\\ 140 56.29\\ 142 56.46\\ 144 56.63\\ 146 56.80\\ 148 56.97\\ 150 57.13\\ 152 57.47\\ 154 57.63\\ 156 57.83\\ 158 57.99\\ 160 58.14\\ 162 58.29\\ 164 58.51\\ 166 58.68\\ 168 58.83\\ 170 58.97\\ 172 59.11\\ 174 59.26\\ 176 59.40\\ 178 59.63\\ 180 59.79\\ 182 59.93\\ 184 60.06\\ 186 60.25\\ 188 60.38\\ 190 60.51\\ 192 60.64\\ 194 60.82\\ 196 60.95\\ 198 61.10\\ 200 61.23\\ 
        }{\dataTheoTrois}

        \pgfplotstableread[row sep=\\]{
        X Y \\ 186 112.88\\ 188 113.19\\ 190 113.50\\ 192 113.80\\ 194 114.16\\ 196 114.45\\ 198 114.78\\ 200 115.08\\ 
        }{\dataTheoSeptKN}

        \pgfplotstableread[row sep=\\]{
          X Y\\106 85.96\\ 108 86.43\\ 110 86.89\\ 112 87.35\\ 114 87.79\\ 116 88.35\\ 118 88.78\\ 120 89.26\\ 122 89.68\\ 124 90.11\\ 126 90.52\\ 128 90.97\\ 130 91.38\\ 132 91.76\\ 134 92.19\\ 136 92.56\\ 138 92.93\\ 140 93.29\\ 142 93.64\\ 144 93.99\\ 146 94.33\\ 148 94.67\\ 150 95.01\\ 152 95.51\\ 154 95.83\\ 156 96.20\\ 158 96.52\\ 160 96.83\\ 162 97.14\\ 164 97.51\\ 166 97.83\\ 168 98.13\\ 170 98.42\\ 172 98.71\\ 174 99.00\\ 176 99.28\\ 178 99.66\\ 180 99.96\\ 182 100.23\\ 184 100.51\\ 186 100.83\\ 188 101.10\\ 190 101.36\\ 192 101.62\\ 194 101.93\\ 196 102.18\\ 198 102.47\\ 200 102.72\\ 
         }{\dataTheoSixKN}
          
        \pgfplotstableread[row sep=\\]{
        X Y\\    58 61.97\\ 60 62.70\\ 62 63.40\\ 64 64.25\\ 66 64.91\\ 68 65.55\\ 70 66.17\\ 72 66.92\\ 74 67.55\\ 76 68.11\\ 78 68.66\\ 80 69.34\\ 82 69.86\\ 84 70.37\\ 86 70.94\\ 88 71.47\\ 90 71.94\\ 92 72.40\\ 94 72.97\\ 96 73.41\\ 98 73.84\\ 100 74.34\\ 102 74.75\\ 104 75.16\\ 106 75.66\\ 108 76.05\\ 110 76.43\\ 112 76.80\\ 114 77.17\\ 116 77.66\\ 118 78.02\\ 120 78.42\\ 122 78.77\\ 124 79.13\\ 126 79.47\\ 128 79.85\\ 130 80.20\\ 132 80.51\\ 134 80.88\\ 136 81.19\\ 138 81.49\\ 140 81.79\\ 142 82.08\\ 144 82.37\\ 146 82.65\\ 148 82.94\\ 150 83.21\\ 152 83.66\\ 154 83.93\\ 156 84.24\\ 158 84.50\\ 160 84.76\\ 162 85.02\\ 164 85.33\\ 166 85.61\\ 168 85.85\\ 170 86.10\\ 172 86.34\\ 174 86.58\\ 176 86.81\\ 178 87.14\\ 180 87.39\\ 182 87.62\\ 184 87.85\\ 186 88.13\\ 188 88.35\\ 190 88.56\\ 192 88.78\\ 194 89.05\\ 196 89.26\\ 198 89.50\\ 200 89.71\\ 
         }{\dataTheoCinqKN}
         
        \pgfplotstableread[row sep=\\]{
        X Y\\ 36 44.60\\ 38 45.64\\ 40 46.54\\ 42 47.39\\ 44 48.19\\ 46 49.22\\ 48 49.95\\ 50 50.80\\ 52 51.54\\ 54 52.18\\ 56 52.80\\ 58 53.53\\ 60 54.10\\ 62 54.66\\ 64 55.37\\ 66 55.89\\ 68 56.39\\ 70 56.88\\ 72 57.51\\ 74 58.02\\ 76 58.47\\ 78 58.90\\ 80 59.46\\ 82 59.87\\ 84 60.28\\ 86 60.75\\ 88 61.18\\ 90 61.55\\ 92 61.92\\ 94 62.39\\ 96 62.74\\ 98 63.09\\ 100 63.49\\102 63.82\\ 104 64.14\\ 106 64.56\\ 108 64.87\\ 110 65.18\\ 112 65.47\\ 114 65.77\\ 116 66.18\\ 118 66.46\\ 120 66.80\\ 122 67.07\\ 124 67.37\\ 126 67.63\\ 128 67.95\\ 130 68.23\\ 132 68.48\\ 134 68.78\\ 136 69.03\\ 138 69.27\\ 140 69.51\\ 142 69.74\\ 144 69.97\\ 146 70.20\\ 148 70.42\\ 150 70.64\\ 152 71.03\\ 154 71.25\\ 156 71.50\\ 158 71.71\\ 160 71.92\\ 162 72.12\\ 164 72.39\\ 166 72.61\\ 168 72.81\\ 170 73.00\\ 172 73.19\\ 174 73.38\\ 176 73.57\\ 178 73.86\\ 180 74.06\\ 182 74.24\\ 184 74.42\\ 186 74.65\\ 188 74.83\\ 190 75.00\\ 192 75.17\\ 194 75.40\\ 196 75.57\\ 198 75.77\\ 200 75.93\\ 
        }{\dataTheoQuatreKN}

        \pgfplotstableread[row sep=\\]{
        X Y\\22 30.05 \\ 24 31.38 \\ 26 32.45 \\ 28 33.70 \\ 30 34.61 \\ 32 35.45 \\ 34 36.24 \\ 36 37.31 \\ 38 38.10 \\ 40 38.76 \\ 42 39.39 \\ 44 39.98 \\ 46 40.80 \\ 48 41.34 \\ 50 42.01 \\ 52 42.57 \\ 54 43.05 \\ 56 43.51 \\ 58 44.08 \\ 60 44.50 \\ 62 44.91 \\ 64 45.49 \\ 66 45.87 \\ 68 46.24 \\ 70 46.61 \\ 72 47.11 \\ 74 47.50 \\ 76 47.84 \\ 78 48.16 \\ 80 48.61 \\ 82 48.92 \\ 84 49.22 \\ 86 49.59 \\ 88 49.92 \\ 90 50.20 \\ 92 50.47 \\ 94 50.85 \\ 96 51.11 \\ 98 51.37 \\ 100 51.69 \\ 102 51.93\\ 104 52.17\\ 106 52.51\\ 108 52.74\\ 110 52.97\\ 112 53.19\\ 114 53.41\\ 116 53.75\\ 118 53.96\\ 120 54.23\\ 122 54.43\\ 124 54.66\\ 126 54.86\\ 128 55.10\\ 130 55.32\\ 132 55.51\\ 134 55.75\\ 136 55.93\\ 138 56.11\\ 140 56.29\\ 142 56.46\\ 144 56.63\\ 146 56.80\\ 148 56.97\\ 150 57.13\\ 152 57.47\\ 154 57.63\\ 156 57.83\\ 158 57.99\\ 160 58.14\\ 162 58.29\\ 164 58.51\\ 166 58.68\\ 168 58.83\\ 170 58.97\\ 172 59.11\\ 174 59.26\\ 176 59.40\\ 178 59.63\\ 180 59.79\\ 182 59.93\\ 184 60.06\\ 186 60.25\\ 188 60.38\\ 190 60.51\\ 192 60.64\\ 194 60.82\\ 196 60.95\\ 198 61.10\\ 200 61.23\\ 
        }{\dataTheoTroisKN}

        In this section, we show that, when the number of equations is
        sufficiently large, we can solve the system given in Modeling \ref{mod:system} with only
        linear algebra computations, by linearization on the $c_T$'s.

        \subsection{The overdetermined case}
%
        The linear system given in Modeling \ref{mod:system} is described by the following matrix $\MaxMin$ with rows
        indexed by $(J,i) : J\subset\{1..n-k-1\}, \size{J}=r, 0\le i \le m-1$ and
        columns indexed by $T \subset\{1..n\}$ of size $r$, with the
        entry in row $(J,i)$ and column $T$ being the coefficient in
        $\alpha^i$ of the element
        $\pm\minor{\Rm}{T_1,J\backslash T_2} \in \Fqm$. More
        precisely, we have
        \begin{eqnarray}
          \MaxMin[(J,i), T]
          &=&
          \begin{cases}
            0 & \text{ if } T_2 \not \subset J\\
            [\alpha^i](-1)^{\Sign{T_2}{J}}(\minor{\Rm}{T_1, J\backslash T_2}) & \text{ if } T_2 \subset J,
          \end{cases}\\
          \text{ with } && T_1=  T \cap \{1..k+1\},\notag{}\\
          \text{ and } && T_2= ( T \cap \{k+2..n\})-(k+1).\notag{}
        \end{eqnarray}
        The matrix $\MaxMin$ can have rank $\binom n r -1$ at most; indeed if it 
        had a maximal rank of $\binom{n}{r}$, this would imply that all $c_T$'s are equal to 
        $0$, which is in contradiction with the assumption $c_{\{1..r\}}=1$.
        \begin{proposition}
          If $\MaxMin$ has rank $\binom n r - 1$, 
then the right kernel of
          $\MaxMin$ contains only one element $
          \begin{pmatrix}
            \cv & 1
          \end{pmatrix}\in\Fq^{\binom n r}$ with value 1 on its component
          corresponding to $c_{\{1..r\}}$. The components of
          this vector contain the values of the $c_{T}$'s, 
        $T\ne\{1..r\}$. This gives the values of all the
          variables
          $C_{i,j} = (-1)^{r+i} c_{\{1..r\}\backslash\{i\}\cup\{j\}}$.
        \end{proposition}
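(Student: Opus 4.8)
The plan is to combine a one-line dimension count for the kernel of $\MaxMin$ with the existence of one explicit solution coming from the genuine error $\ev$, and then to recover the coefficient variables by a single cofactor expansion.

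First I would note that $\MaxMin$ has exactly $\binom n r$ columns, so the hypothesis $\rank{\MaxMin}=\binom n r -1$ forces its right kernel to be one-dimensional over $\Fq$. Next I would exhibit a distinguished element of this line: since $\ev$ has rank exactly $r$ and, after the coordinate permutation discussed in Section~\ref{sec:modellings}, the first $r$ columns of the associated matrix $\Cm$ are $\Fq$-linearly independent, we may take $\Cm=\begin{pmatrix}\Im_r&\Cm'\end{pmatrix}$ as in Modeling~\ref{mod:system}. By the Cauchy--Binet argument recalled in Proposition~\ref{eq:oldarticle} (see \cite[Prop.~1]{BBBGNRT19}), the maximal minors of $\Cm\trsp{\Hm}$ vanish and are $\Fqm$-linear combinations of the $c_T$'s; unfolding over $\Fq$, the vector of $c_T$-values attached to this $\Cm$ lies in the right kernel of $\MaxMin$, and its component indexed by $T=\{1..r\}$ equals $\minor{\Cm}{\any,\{1..r\}}=\det\Im_r=1$.

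Since the kernel is one-dimensional and contains a vector with nonzero $c_{\{1..r\}}$-component, every kernel vector is a scalar multiple of it, and exactly one such multiple has $c_{\{1..r\}}$-component equal to $1$; this is the vector $\begin{pmatrix}\cv&1\end{pmatrix}$ of the statement, and its remaining entries are the sought values of the $c_T$'s, $T\ne\{1..r\}$. Finally, for $T=\{1..r\}\backslash\{i\}\cup\{j\}$ with $i\in\{1..r\}$ and $j\in\{r+1..n\}$, expanding $c_T=\minor{\Cm}{\any,T}$ along its unique column that is not a standard basis vector, namely $\trsp{(C_{1,j},\dots,C_{r,j})}$, leaves only the term coming from row $i$ (its complementary minor is the determinant of an $(r-1)\times(r-1)$ identity block, hence $1$), which gives $c_T=(-1)^{r+i}C_{i,j}$, i.e.\ $C_{i,j}=(-1)^{r+i}c_{\{1..r\}\backslash\{i\}\cup\{j\}}$; thus all coefficient variables are read off from the kernel vector.

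The only genuinely delicate step is the second one: one must guarantee that a solution with $c_{\{1..r\}}\ne0$ exists at all. This rests precisely on the hypotheses of the statement---that $\ev$ can be uniquely decoded and has rank exactly $r$---together with the fact recalled in Section~\ref{sec:modellings} that, up to a permutation of the coordinates, one may always assume the chosen $r$ columns of $\Cm$ to be independent, so that the specialization $\Cm=\begin{pmatrix}\Im_r&\Cm'\end{pmatrix}$ is legitimate and hence yields a bona fide kernel element with $c_{\{1..r\}}=1$. Everything else is elementary linear algebra and one cofactor expansion.
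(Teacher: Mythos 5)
Your proof is correct and takes essentially the same route as the paper: both rest on the rank hypothesis combined with the existence of a genuine solution having $c_{\{1..r\}}=1$ (the paper phrases this through the shape of the row echelon form of $\MaxMin$, you through a one-dimensional-kernel count plus normalization, which is only a cosmetic difference). Your concluding cofactor expansion giving $C_{i,j}=(-1)^{r+i}c_{\{1..r\}\backslash\{i\}\cup\{j\}}$ is the same computation the paper invokes just before Modeling~\ref{mod:system}, so nothing is missing.
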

        \begin{proof}
          If $\MaxMin$ has rank $\binom{n}{r}-1$, then as there is a solution
          to the system, a row echelon form of the matrix has the shape
         \begin{equation*}
           \begin{pmatrix}
           \Im_{\binom n r - 1} & -\trsp{\cv}\\
           \zerom & \zerom
         \end{pmatrix}
         \end{equation*}
         with $\cv$ a vector in $\Fq$ of size $\binom n r - 1$: we cannot get
         a jump in the stair of the echelon form as it would imply
         that~$\mathcal F_M$
         has no solution. Then $
         \begin{pmatrix}
           \cv & 1
         \end{pmatrix}$ is in the right kernel of $\MaxMin$. \qed
        \end{proof}
        It is then easy to recover the variables $\Sm$
        from~\eqref{eqn:oj-modelling-specialized-in-C} by linear
        algebra.  The following algorithm recovers the error if there
        is one solution to the
        system~\eqref{eqn:oj-modelling-specialized-in-C}. It is shown
        in~\cite[Algorithm 1]{BBBGNRT19} how to deal with the other
        cases, and this can be easily adapted to the 
        specialization considered in this paper.

         \begin{algorithm}[htbp]
          \caption{$(m,n,k,r)$-decoding in the overdetermined case.}
         \KwData{Code $\mathcal C$, vector $\yv$ at distance $r$ from $\mathcal C$, such that $m\binom{n-k-1}r \ge \binom n r - 1$ and $\MaxMin$ has rank $\binom n r -1$}
         \KwResult{The error $\ev$ of weight $r$  such that $\yv-\ev\in\mathcal C$}
         Construct  \(\MaxMin\), the \(m\binom{n-k-1}r \times \binom n r\) matrix over $\ff q$ associated to the system 
$\mathcal F_M$ \; 
        Let $
        \begin{pmatrix}
        \cv & 1
        \end{pmatrix}
        $ be the only such vector in the right kernel of $\MaxMin$ \;
          {Compute the values $\Cm^* = (C^*_{i,j})_{i,j}$ from $\cv$\;
            Compute the values $(S_1^*,\dots,S_r^*)\in\ff {q^m}^r$ by solving the linear system\\
          \[(S_1,\dots,S_r)\cm^*
        \trsp{\Hm} = 0\] and taking the unique value with $S_1^*=1$\;
        }
          \Return $(1,S_2^*,\dots,S_r^*)\cm^*$ \;
         \label{alg:overdetermined-case}
        \end{algorithm}
        \begin{proposition}
        When $m\binom{n-k-1}r \ge \binom n r - 1$ and $\MaxMin$ has rank $\binom n r -1$, 
        then~\cref{alg:overdetermined-case} recovers the error in complexity 
         \begin{equation}\label{eq:overdetermined-complexity}
        \mathcal{O}\left(   m\binom{n-k-1}r {\binom n r}^{\omega-1}\right)
        \end{equation}
        operations in the field $\ff q$, where $\omega$ is the constant of
        linear algebra.
        \end{proposition}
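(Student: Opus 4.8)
The plan is to treat correctness and running time separately: correctness is inherited from the two preceding propositions, and the running time reduces to the cost of one linear-algebra computation on $\MaxMin$.

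\emph{Correctness.} First I would invoke \cref{prop:projection}: since $\ev$ is the unique weight-$r$ error, $V_{\ff q}(\mathcal F_M)$ is the projection of $V_{\ff q}(\mathcal F_C)$ on the $\Cm$-coordinates, so $V_{\ff q}(\mathcal F_M)$ is nonempty and each of its points lifts to a genuine solution of the specialized decoding system. Under the rank hypothesis, the proposition preceding \cref{alg:overdetermined-case} then gives that the right kernel of $\MaxMin$ is the line through $\begin{pmatrix}\cv & 1\end{pmatrix}$, whose components are exactly the $c_T$'s of the (up to scalar unique) solution normalized by $c_{\{1..r\}}=1$; the specialization $\Cm=\begin{pmatrix}\Im_r & \Cm'\end{pmatrix}$ forces $C_{i,j}=(-1)^{r+i}c_{\{1..r\}\backslash\{i\}\cup\{j\}}$, so $\Cm^*$ is read off from $\cv$ directly. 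Substituting this $\Cm^*$ into \eqref{eqn:oj-modelling-specialized-in-C} makes it linear in the entries of $\Sm$, and the left-kernel argument in the proof of \cref{prop:projection} shows it has a unique solution with first coordinate $1$ (a different normalization would produce an error of weight $<r$, contradicting uniqueness); this is the $(S_1^*,\dots,S_r^*)$ computed by the algorithm, and $(1,S_2^*,\dots,S_r^*)\Cm^*$ equals $\ev$ by construction.

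\emph{Dominant step.} The cost is governed by the kernel computation on $\MaxMin$, an $N\times M$ matrix over $\ff q$ with $N=m\binom{n-k-1}r$, $M=\binom n r$ and $N\ge M-1$ by hypothesis. I would show that a row echelon form (hence the kernel) of such a matrix is computable in $\mathcal{O}\!\left(N M^{\omega-1}\right)$ operations over $\ff q$: split the rows into $\lceil N/M\rceil$ blocks of at most $M$ rows, echelonize the first block in $\mathcal{O}(M^\omega)$, and for each subsequent block clear the already-found pivot columns (an $M\times M$ by $M\times M$ product) and echelonize it, again in $\mathcal{O}(M^\omega)$; summing over the $\lceil N/M\rceil$ blocks yields $\mathcal{O}(N M^{\omega-1})$, which is exactly \eqref{eq:overdetermined-complexity}.

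\emph{Lower-order steps and the one delicate point.} The remaining operations are negligible against this and I would dispose of them quickly: recovering $\Cm^*$ from $\cv$ is $\mathcal{O}(r(n-r))$ sign flips; solving $(S_1,\dots,S_r)\Cm^*\trsp{\Hm}=0$ is a linear system over $\ff{q^m}$ with $r$ unknowns and $n-k-1$ equations, costing $\mathcal{O}\!\left((n-k-1)r^{\omega-1}\right)$ operations over $\ff{q^m}$; and the final vector--matrix product is $\mathcal{O}(rn)$ operations over $\ff{q^m}$ --- all polynomial in $n$. Assembling $\MaxMin$ requires the $r\times r$ minors $\minor{\Rm}{T_1,J\backslash T_2}$ of $\Rm$ over $\ff{q^m}$, and a coarse bound on their number times the cost of one determinant stays below $N M^{\omega-1}$ in the parameter range of interest (and can be improved by sharing common subdeterminants, which is not needed). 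The one point that requires care is the block argument for the tall rectangular system: one must check that once the rank $M-1$ has been reached the remaining $N-(M-1)$ rows need only a cheap consistency check, which is exactly the ``no jump in the stair'' fact already used in the preceding proof, guaranteed here by the rank hypothesis together with the existence of a solution. Putting these together gives the claimed total of $\mathcal{O}\!\left(m\binom{n-k-1}r\binom n r^{\omega-1}\right)$ operations in $\ff q$.
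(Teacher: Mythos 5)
Your proposal is correct and follows essentially the same route as the paper, whose proof simply observes that the dominant cost is the kernel computation of the $m\binom{n-k-1}{r}\times\binom{n}{r}$ matrix $\MaxMin$, bounded by $\mathcal{O}\bigl(m\binom{n-k-1}{r}\binom{n}{r}^{\omega-1}\bigr)$. You merely make explicit what the paper leaves implicit: correctness via \cref{prop:projection} and the preceding kernel proposition, the block-echelon argument giving $\mathcal{O}(NM^{\omega-1})$ for a tall rectangular matrix, and the fact that building $\MaxMin$ and the remaining steps are lower order in the relevant parameter range.
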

        \begin{proof}
          To recover the error, the most consuming part is the computation of
          the left kernel of the matrix $\MaxMin$ in
          $\ff q^{m\binom{n-k-1}r\times \binom n r}$, in the case where
          $m\binom{n-k-1}r \ge \binom n r - 1$. 
          This complexity is bounded
          by~\cref{eq:overdetermined-complexity}. \qed
        \end{proof}

         We ran a lot of experiments with random codes $\mathcal C$ such that 
         $m\binom{n-k-1}r\ge\binom n r -1$, and the matrix $\MaxMin$ was always 
         of rank $\binom n r -1$. 
          That is why we propose the
         following heuristic about the rank of $\MaxMin$.
        \begin{heuristic}[Overdetermined case]\label{heuristic:over-determined}
          When $m\binom{n-k-1}r\ge\binom{n}r-1$, with overwhelming
          probability, the rank of the matrix $\MaxMin$ is $\binom n r - 1$.
        \end{heuristic}

        Figure~\ref{fig:r=345exp} gives the experimental results  for $q=2$, $r=3, 4, 5$ and different values of
        $n$. We choose to keep $m$ prime and close to $n/1.18$ to have
        a data set containing the parameters of the ROLLO-I
        cryptosystem. We choose for $k$ the minimum between
        $\frac n 2$ and the largest value leading to an overdetermined
        case. We have $k=\frac n 2$ as soon as $n\ge 22$ for $r=3$,
        $n\ge 36$ for $r=4$, $n\ge 58$ for $r=5$. The figure shows
        that the estimated complexity is a good upper bound for the
        computation's complexity. It also shows that this upper bound
        is not tight. Note that the experimental values are the
        complexity of the whole attack, including building the
        matrix that requires to compute the minors of $\Rm$. Hence for
        small values of $n$, it may happen that this part of the
        attack takes more time than solving the linear
        system. This explains why, for $r=3$ and $n<28$, the
        experimental curve is above the theoretical one.

        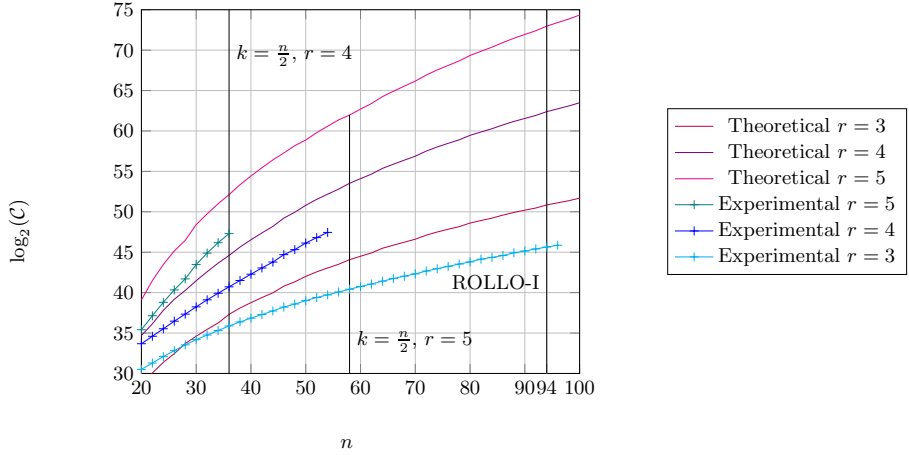
\begin{figure}[h]
          \centering
          \begin{tikzpicture}[scale=0.85]
           \pgfplotsset{every axis legend/.append style={at={(1,0.2)},anchor=east}}
           \def\nmin{20}\def\nmax{100}
           \def\Cmin{30}\def\Cmax{75}
           \begin{axis}[ grid=major, extra x ticks={94}, try min ticks=9,
             xmin=\nmin, xmax=\nmax, ymin=\Cmin, ymax=\Cmax,
             xlabel style={at={(.5,-0.1)},anchor=east},
             ylabel style={at={(-0.1,0.5)},anchor=east},
             xlabel={\(n\)},
             ylabel={\(\log_2(\mathcal{C})\)},
             legend style={ at={(1.2,0.5)}, anchor=west },
             title={Complexity for $r=3, r=4, r=5$ in the overdetermined cases}
             ]
            \addplot[purple] table {\dataTheoTrois};
            \addlegendentry{Theoretical $r=3$}
            \addplot[violet] table {\dataTheoQuatre};
            \addlegendentry{Theoretical $r=4$}
            \addplot[magenta] table {\dataTheoCinq};
            \addlegendentry{Theoretical $r=5$}
            \addplot[teal,mark=+] table {\dataExpOptCinq};
            \addlegendentry{Experimental  $r=5$}
            \addplot[blue,mark=+] table {\dataExpOptQuatre};
            \addlegendentry{Experimental  $r=4$}
            \addplot[cyan,mark=+] table {\dataExpOptTrois};
            \addlegendentry{Experimental  $r=3$}

            \draw (axis cs:94,\Cmin) -- (axis cs:94,\Cmax)
            node[near start, left]{ROLLO-I};
            \draw (axis cs:36,\Cmin) -- (axis cs:36,\Cmax)
            node[very near end, right]{$k=\frac n 2$, $r=4$};
            \draw (axis cs:58,\Cmin) -- (axis cs:58,62)
            node[very near start, right]{$k=\frac n 2$, $r=5$};
          \end{axis}

          \end{tikzpicture}
          \caption{\label{fig:r=345exp}Theoretical vs Experimental value of the complexity of the
          computation.  The computations are done using \texttt{magma
            v2.22-2} on a machine with an
          Intel\textsuperscript{\textregistered}
          Xeon\textsuperscript{\textregistered} 2.00GHz processor (\emph{Any mention of commercial products is for information only and does not imply endorsement by NIST}). We
          measure the experimental complexity in terms of clock cycles of the CPU, given
          by the \texttt{magma} function \texttt{ClockCycles()}. The
          theoretical value is the binary logarithm of
          $m\binom{n-k-1}r\binom n {r}^{2.81-1}$. 
          $m$ is the largest prime
          less than $n/1.18$, $k$ is the minimum of $n/2$ (right part of
          the graph) and the largest value for which the system is
          overdetermined (left part).}
        \end{figure}

        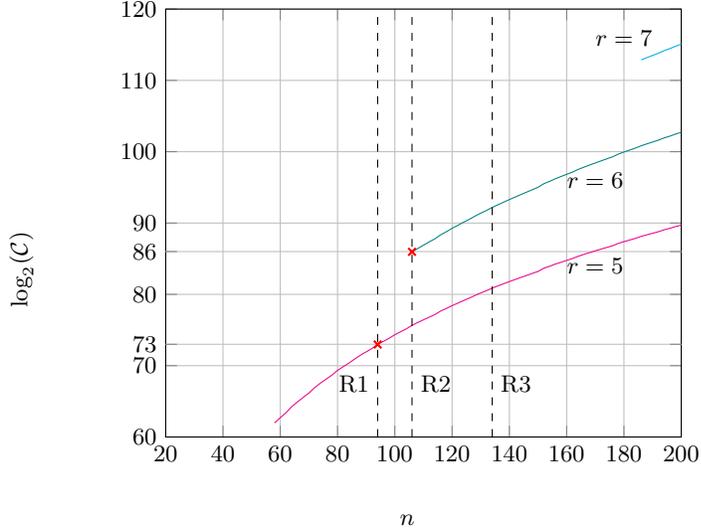
\begin{figure}[h]
          \centering
          \begin{tikzpicture}[scale=1]
           \pgfplotsset{every axis legend/.append style={at={(1,0.2)},anchor=east}}
           \def\nmin{20}\def\nmax{200}
           \def\Cmin{60}\def\Cmax{120}
           \begin{axis}[ grid=major, try min ticks=9, extra y ticks={73,86},
             xmin=\nmin, xmax=\nmax, ymin=\Cmin, ymax=\Cmax,
             xlabel style={at={(.5,-0.1)},anchor=east},
             ylabel style={at={(-0.1,0.5)},anchor=east},
             xlabel={\(n\)},
             ylabel={\(\log_2(\mathcal{C})\)},
             legend style={ at={(0.0,-0.4)}, anchor=west },
             title={Theoretical complexity for $r=5, 6, 7$ in the \emph{overdetermined} cases when $n=2k$.}
             ]
             \addplot[magenta] table {\dataTheoCinqKN};
             \addplot[teal] table {\dataTheoSixKN};
             \addplot[cyan] table {\dataTheoSeptKN};
             \draw (axis cs:170,84) node {$r=5$};
             \draw (axis cs:170,96) node {$r=6$};
             \draw (axis cs:180,116) node {$r=7$};

            \draw[dashed] (axis cs:94,\Cmin) -- (axis cs:94,\Cmax)
            node[very near start, left]{R1};
            \draw[dashed] (axis cs:106,\Cmin) -- (axis cs:106,\Cmax)
            node[very near start, right]{R2};
            \draw[dashed] (axis cs:134,\Cmin) -- (axis cs:134,\Cmax)
            node[very near start, right]{R3};
            \addplot[only marks,mark=x,red,thick] coordinates{ (94,72.97) (106,85.96)};
          \end{axis}
          \end{tikzpicture}
          \caption{\label{fig:theoretical-r=567}Theoretical value of the complexity of the computation in
          the overdetermined cases, which is the binary logarithm of
          $m\binom{n-k-1}r\binom n {r}^{2.81-1}$.  $m$ is the largest prime
          less than $n/1.18$, $n=2k$. The axis ``R1, R2, R3'' correspond to
          the values of $n$ for the cryptosystems ROLLO-I-128; ROLLO-I-192
          and ROLLO-I-256.}
        \end{figure}

        \begin{figure}[h]
          \centering
          \begin{tikzpicture}[scale=1]
           \pgfplotsset{every axis legend/.append style={at={(1,0.2)},anchor=east}}
           \def\nmin{20}\def\nmax{250}
           \def\Cmin{60}\def\Cmax{280}
           \begin{axis}[ grid=major, try min ticks=10, 
             xmin=\nmin, xmax=\nmax, ymin=\Cmin, ymax=\Cmax,
             xlabel style={at={(.5,0)},anchor=east},
             ylabel style={at={(0,0.5)},anchor=east},
             xlabel={\(n\)},
             ylabel={\(\log_2(\mathcal{C})\)},
             title={Theoretical complexity for $r=5\dots 9$  when $n=2k$.}
             ]
             \addplot[magenta] table {\dataTheoOptCinq};
             \addplot[teal] table {\dataTheoOptSix};
             \addplot[cyan] table {\dataTheoOptSept};
             \addplot[blue] table {\dataTheoOptHuit};
             \addplot[red] table {\dataTheoOptNeuf};
             \draw (axis cs:170,84) node {$r=5$};
             \draw (axis cs:170,96) node {$r=6$};
             \draw (axis cs:80,150) node {$r=7$};
             \draw (axis cs:180,240) node {$r=8$};
             \draw (axis cs:45,240) node {$r=9$};

            \draw[dashed] (axis cs:94,\Cmin) -- (axis cs:94,\Cmax);
            \draw[dashed] (axis cs:106,\Cmin) -- (axis cs:106,\Cmax);
            \draw[dashed] (axis cs:134,\Cmin) -- (axis cs:134,\Cmax);
            \addplot[only marks,mark=x,red,thick] coordinates{ (94,70.15) (106,86.17) (134,157.9)};
          \end{axis}
          \end{tikzpicture}
          \caption{\label{fig:opt-r=5-9}Theoretical value of the complexity of
            RD in the overdetermined case (using punctured codes or specialization). $\mathcal{C}$  is 
            the smallest value between~\eqref{eq:complexitya}
            and~\eqref{eq:complexityp}.  $m$ is the largest prime less than
            $n/1.18$, $n=2k$. The dashed axes correspond to the values of $n$
            for the cryptosystems ROLLO-I-128; ROLLO-I-192 and ROLLO-I-256.}
        \end{figure}
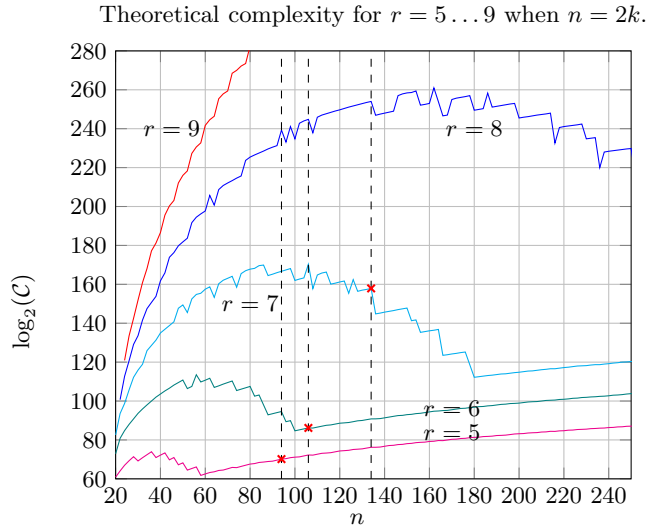

         Figure~\ref{fig:theoretical-r=567} shows the theoretical complexity for the same parameter regime
as Figure~\ref{fig:r=345exp} which fit the overdetermined case.
The graph starts from the first value of
         $n$ where $(n/1.18, n, 2k, r)$ is in the overdetermined case. We can
         see that theoretically, the cryptosystem ROLLO-I-128 with parameters
         $(79, 94, 47, 5)$ needs $2^{73}$ bit operations to decode an error,
         instead of the announced $2^{128}$ bits of security. In the same way,
         ROLLO-I-192 with parameters $(89, 106, 53, 6)$ would have 86 bits of
         security instead of 192. The parameters $(113, 134, 67, 7)$ for
         ROLLO-I-256 are not in the overdetermined case.

         There are two classical improvements that can be used to
         lower the complexity of solving an algebraic system. The
         first one consists in selecting a subset of all equations,
         when some of them are redundant, see~\cref{improve_over}. The
         second one is the hybrid attack that will be explained
         in~\cref{hybrid_over}.
        \subsection{Improvement in the ``super''-overdetermined case\label{improve_over} by puncturing}
        We consider the case when the system is
        ``super''-overdetermined, i.e. when the number of rows in $\MaxMin$ is
        really larger than the number of columns. In that case, it is not
        necessary to consider all equations, we just need the minimum number
        of them to be able to find the solution.

        To select the good equations (i.e. the ones that are likely to
        be linearly independent), we can take the system $\MaxMinors$
        obtained by considering code $\Cy$ punctured on the $p$ last
        coordinates, instead of the entire code. Puncturing code $\Cy$
        is equivalent to shortening the dual code, i.e. considering
        the system
        \begin{equation}
          \label{eq:punctMaxMinors}
          \MaxMinors\left(
            \Cm_{\any,\{1..n-p\}} (\trsp{\Hm})_{\{1..n-p\},\{1..n-k-1-p\}}
          \right).
        \end{equation}
        as we take $\Hm$ to be systematic on the last coordinates.  This
        system is formed by a sub-sequence of polynomials in $\MaxMinors$ that
        do not contain the variables $c_{i,j}$ with $n-p+1\le j \le n$. This
        system contains $m\binom{n-p-k-1}{r}$ equations in $\binom{n-p}{r}$
        variables $c_{T}$ with $ T \subset \{1..n-p-k-1\}$. If
        we take the maximal value of $p$ such that
        $m\binom{n-p-k-1}{r}\ge \binom{n-p}{r}-1$, we can still apply
        Algorithm~\ref{alg:overdetermined-case} but the complexity is reduced
        to
         \begin{equation}\label{eq:complexityp}
           \mathcal{O}\left(   m\binom{n-p-k-1}r {\binom {n-p} r}^{\omega-1}\right)
        \end{equation}
        operations in the field $\ff q$.

        \subsection{Reducing to the overdetermined case: hybrid attack\label{hybrid_over}}

        Another classical improvement consists in using an hybrid approach
        mixing exhaustive search and linear resolution, like
        in~\cite{BFP09}. This consists in specializing some variables of the
        system to reduce an underdetermined case to an overdetermined one.
For instance, if we specialize $a$ columns of the matrix $\Cm$, we are
        left with solving $q^{ar}$ linear systems $\MaxMin$ of size
        $m\binom{n-k-1}{r} \times \binom{n-a}r$, and the global cost is
         \begin{equation}\label{eq:complexitya}
           \mathcal{O}\left( q^{ar}  m\binom{n-k-1}r {\binom {n-a} r}^{\omega-1}\right)
        \end{equation}
        operations in the field $\ff q$. 
        In order to minimize the previous complexity (\ref{eq:complexitya}), one 
        chooses $a$ to be the smallest integer such that the condition 
        $m\binom{n-k-1}{r} \ge \binom {n-a}{r} -1$ is fulfilled. 
        Figure~\ref{fig:opt-r=5-9} page~\pageref{fig:opt-r=5-9} 
        gives the best theoretical complexities obtained for $r=5\dots 9$ with
        the best values of $a$ and $p$, for $n=2k$. Table~\ref{tab:num} page~\pageref{tab:num} gives
        the complexities of our attack (column ``\textbf{This paper}'') for all the parameters in the ROLLO and RQC
        submissions to the NIST competition; for the sake of clarity, we give the previous complexity 
        from \cite{BBBGNRT19}.

\section{Solving RD and MinRank: underdetermined case\label{attack_2}}\label{sec:underdetermined}

This section analyzes the support minors modeling approach (Modeling \ref{mod:support_minors_modeling}).

      \subsection{Solving \eqref{eq:minrank-homogeneous} by direct linearization}



The number of monomials that can appear in Modeling \ref{mod:support_minors_modeling} is $K \binom{n}{r}$ whereas the number of equations is 
$m \binom{n}{r+1}$. When the solution space of \eqref{eq:minrank-homogeneous} is of dimension 1, we expect to solve it by direct linearization whenever:
      \begin{equation}
      \label{b=1case}
      m \binom{n}{r+1} \geq K \binom{n}{r}-1.
      \end{equation}
      We did a lot of experiments as explained in \cref{experiences_minrank}, and they suggest that it is the case. 
      \begin{remark}
      Note that, in what follows, the \cref{b=1case} will sometimes be referred as the ``$b=1$ case''. 
      \end{remark}

      \subsection{Solving Support Minors Modeling at a higher degree, $q>b$
      \label{solving_support_minors_higher_degree}}
    In the case where \cref{b=1case} does not hold we may produce a generalized version of Support Minors Modeling, multiplying the Support Minors Modeling equations by homogeneous degree $b-1$ monomials in the linear variables, resulting in a system of equations that are homogeneous degree 1 in the  variables $c_{T}$ and homogeneous degree $b$ in the variables $x_i$. The strategy will again be to linearize over monomials.
Unlike in the simpler $b=1$ case, for $b \geq 2$ we cannot assume that all $m \binom{n}{r+1}\binom{K+b-2}{b-1}$ equations we produce in this way are linearly independent up to the point where we can solve the system by linearization. In fact, we can construct explicit linear relations between the equations starting at $b=2.$ 

In this section, we will focus on the simpler $q>b$ case. We will deal
with the common $q=2$ case in~\cref{seq:q=2}.  There is however an
unavoidable complication which occurs whenever we consider $b\ge q$, $q\ne 2$.

We can construct linear relations between the equations  from determinantal identities involving maximal minors of matrices whose first rows are some of the $\vec{r_j}$'s concatenated with $\Cm$. For instance we may write the trivial identity for any subset $J$ of columns of size $r+2$:
      $$
      \minor{\substack{
      \vec{r_j} \\
      \vec{r_k}\\
      \Cm}}{\any,J}  +  \minor{\substack{
      \vec{r_k} \\
      \vec{r_j}\\
      \Cm}}{\any,J}=0.
      $$ 
      Notice that this gives trivially a relation between certain equations corresponding to $b=2$ since
      a cofactor expansion along the first row of $   \minor{\substack{
      \vec{r_j} \\
      \vec{r_k}\\
      \Cm}}{\any,J}$ shows that this maximal minor is indeed a linear combination of 
      terms which is the multiplication of a linear variable $x_i$ with a maximal minor of the matrix $\begin{pmatrix}
      \vec{r_k}\\
      \Cm
      \end{pmatrix}$ (in other words an equation corresponding to $b=2$). A similar result holds for 
      $  \minor{\substack{
      \vec{r_k} \\
      \vec{r_j}\\
      \Cm}}{\any,J}$ where a cofactor expansion along the first row yields terms formed by a linear variable $x_i$ multiplied by a maximal minor of the matrix $\begin{pmatrix}
      \vec{r_j}\\
      \Cm
      \end{pmatrix}$. 
      This result can be generalized by considering symmetric tensors $(S_{j_1,\cdots, j_r})_{\substack{1 \leq j_1 \leq m\\
      \cdots \\ 1 \leq j_r \leq m
      }}$ of dimension $m$ of rank $b \geq 2$ over $\Fq$. Recall that these are tensors that satisfy
      $$
      S_{j_1,\cdots,j_b} = S_{j_{\sigma(1)},\cdots,j_{\sigma(b)}}$$
       for any permutation $\sigma$ acting on $\{1..b\}$.
      This is a vector space that is clearly isomorphic to the space of homogeneous polynomials of degree $b$ in $y_1,\cdots,y_m$ over $\Fq$. The dimension of this space is therefore $\binom{m+b-1}{b}$. We namely have
      \begin{proposition}\label{pr:symmetric}
      For any symmetric tensor $(S_{j_1,\cdots, j_b})_{\substack{1 \leq j_1 \leq m\\
      \cdots \\ 1 \leq j_b \leq m
      }}$ of dimension $m$ of rank $b\geq 2$ over $\Fq$ and any subset $J$ of $\{1..n\}$ of size $r+b$, we have:
      $$
      \sum_{j_1=1}^m\cdots \sum_{j_b=1}^m S_{j_1,\cdots, j_b}  \minor{\substack{
      \;\;\vec{r_{j_1}}\;\; \\
      \cdots \\
      \vec{r_{j_b}}\\
      \Cm}}{\any,J}=0.
      $$
      \end{proposition}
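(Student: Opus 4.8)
The plan is to treat the quantity
\[
D(j_1,\dots,j_b)\eqdef\minor{\substack{\;\;\vec{r_{j_1}}\;\;\\ \cdots\\ \vec{r_{j_b}}\\ \Cm}}{\any,J}
\]
as an \emph{alternating} function of the index tuple $(j_1,\dots,j_b)$ and then use the standard fact that a symmetric tensor paired with a skew‑symmetric one is zero, which here needs only $b\ge 2$. Concretely, $D(j_1,\dots,j_b)$ is the determinant of the $(r+b)\times(r+b)$ matrix obtained by selecting the columns in $J$ (of size exactly $r+b$) of the $(r+b)\times n$ matrix whose first $b$ rows are $\vec{r_{j_1}},\dots,\vec{r_{j_b}}$ and whose last $r$ rows form $\Cm$. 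It is a polynomial in the $x_i$'s and the entries of $\Cm$, and the claimed identity is a formal identity in that polynomial ring, so there is no issue with evaluating at a solution.

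First I would record the two properties of $D$ that follow from multilinearity and antisymmetry of the determinant in its rows: \emph{(i)} if two of the indices $j_a$ coincide then two rows of the matrix coincide, hence $D(j_1,\dots,j_b)=0$; and \emph{(ii)} for every permutation $\sigma$ of $\{1,\dots,b\}$ one has $D(j_{\sigma(1)},\dots,j_{\sigma(b)})=\operatorname{sgn}(\sigma)\,D(j_1,\dots,j_b)$, since permuting the indices permutes the first $b$ rows. (Taking $b=2$ and $\sigma$ the transposition recovers exactly the trivial identity stated just above the proposition.)

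Next I would split the sum $\sum_{j_1,\dots,j_b}S_{j_1,\dots,j_b}\,D(j_1,\dots,j_b)$ according to the underlying set of the indices. By \emph{(i)} only tuples with pairwise distinct entries contribute. For a fixed $b$‑subset $\{i_1<\dots<i_b\}\subset\{1..m\}$, the tuples with that entry set are precisely the $b!$ reorderings $(i_{\sigma(1)},\dots,i_{\sigma(b)})$ as $\sigma$ runs over the permutations of $\{1,\dots,b\}$. Using symmetry of $S$, namely $S_{i_{\sigma(1)},\dots,i_{\sigma(b)}}=S_{i_1,\dots,i_b}$, together with \emph{(ii)}, the contribution of this subset is
\[
S_{i_1,\dots,i_b}\,D(i_1,\dots,i_b)\sum_{\sigma}\operatorname{sgn}(\sigma)=0,
\]
because for $b\ge2$ the symmetric group on $b$ letters has equally many even and odd elements, so $\sum_{\sigma}\operatorname{sgn}(\sigma)=0$ already in $\ZZ$. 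Summing over all $b$‑subsets yields $0$, which is the claim.

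I do not expect a genuine obstacle: the heart of the argument is the classical "symmetric contracted with alternating is zero'', and it is characteristic‑free precisely because $\sum_{\sigma}\operatorname{sgn}(\sigma)$ vanishes over $\ZZ$ for $b\ge2$. The only point that takes a little care is to phrase \emph{(i)}--\emph{(ii)} with the correct treatment of repeated indices, so that the decomposition of the index set $\{1..m\}^b$ into the orbits of the symmetric‑group action (the distinct‑entry orbits killed by the signature sum, the others killed term‑by‑term by \emph{(i)}) is literally valid.
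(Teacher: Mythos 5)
Your proof is correct and follows essentially the same route as the paper's: kill the tuples with repeated indices by antisymmetry of the determinant in its rows, then group the remaining tuples into permutation orbits, factor out the common value of the symmetric tensor, and conclude because the signed sum over $S_b$ vanishes (equally many even and odd permutations for $b\ge 2$, hence vanishing over $\ZZ$ and so in any characteristic). The only difference is presentational — you organize the orbits by $b$-subsets explicitly — so no further comment is needed.
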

\begin{proof}
Notice first that the maximal minor $\minor{\substack{
      \;\;\vec{r_{j_1}} \;\; \\
      \cdots \\
      \vec{r_{j_b}}\\
      \Cm}}{\any,J}$ is equal to $0$ whenever at least two of the $j_i$'s are equal. The left-hand sum reduces therefore to a sum of terms of the form $\sum_{\sigma \in S_b} S_{\sigma(j_1),\cdots,\sigma(j_b)} \minor{\substack{
      \;\vec{r_{\sigma(j_1)}}\; \\
      \cdots \\
      \vec{r_{\sigma(j_b)}}\\
      \Cm}}{\any,J}$ where all the $j_i$'s are different. Notice now that from the fact that $S$ is a symmetric tensor we have
\begin{eqnarray*}
\sum_{\sigma \in S_b} S_{\sigma(j_1),\cdots,\sigma(j_b)} \minor{\substack{
     \; \vec{r_{\sigma(j_1)}} \; \\
      \cdots \\
      \vec{r_{\sigma(j_b)}}\\
      \Cm}}{\any,J}&= & S_{j_1,\cdots,j_b} \sum_{\sigma \in S_b}  \minor{\substack{
      \;\vec{r_{\sigma(j_1)}} \;\\
      \cdots \\
      \vec{r_{\sigma(j_b)}}\\
      \Cm}}{\any,J}\\
      &=&0
\end{eqnarray*}
because the determinant is an alternating form and there as many odd and even permutations in the symmetric group of order $b$ when $b \geq 2$. $\qed$
\end{proof}

This proposition can be used to understand the dimension $\mathrm{D}$ of the space of linear equations we obtain after linearizing the equations we obtain for a certain $b$. For instance for $b=2$ we obtain $m \binom{n}{r+1} K$ linear equations (they are obtained by linearizing the equations resulting from multiplying all the equations of the support minors modeling by one of the $K$ linear variables). However as shown by Proposition \ref{pr:symmetric} all of these equations are not independent and we have
$\binom{n}{r+2}\binom{m+1}{2}$ linear relations coming from all relations of the kind
\begin{equation}\label{eq:smm2}
\sum_{j=1}^m \sum_{k=1}^m S_{j,k}  \minor{\substack{
      \;\;\vec{r_{j}}\;\; \\
      \vec{r_{k}}\\
      \Cm}}{\any,J}=0.
\end{equation}
In our experiments, these relations turnt out to be independent yielding that the dimension $\mathrm{D}$ of this space should not be greater than $ m \binom{n}{r+1} K - \binom{n}{r+2}\binom{m+1}{2}$. Experimentally, we observed that we indeed had
$$
\Exp = m \binom{n}{r+1} K - \binom{n}{r+2}\binom{m+1}{2}.
$$
For larger values of $b$ things get more complicated but again Proposition \ref{pr:symmetric} plays a key role here.
Consider for example the case $b=3$. We have in this case $m \binom{n}{r+1} \binom{K+1}{2}$ equations obtained by multiplying all the equations of the support minors modeling by monomials of degree $2$ in the linear variables. Again these equations are not all independent, there are $\binom{m+1}{2}\binom{n}{r+2}K$ equations obtained by mutiplying all the linear relations between the $b=2$ equations derived from \eqref{eq:smm2} by a linear variable, they are of the form
\begin{equation}\label{eq:smm21}
x_i \sum_{j=1}^m \sum_{k=1}^m S_{j,k}  \minor{\substack{
      \;\;\vec{r_{j}}\;\; \\
      \vec{r_{k}}\\
      \Cm}}{\any,J}=0.
\end{equation}
But all these linear relations are themselves not independent as can be checked by using Proposition \ref{pr:symmetric} with $b=3$, we namely have for any symmetric tensor 
$S_{i,j,k}$ of rank $3$:
\begin{equation}\label{eq:smm3}
\sum_{i=1}^m\sum_{j=1}^m \sum_{k=1}^m S_{i,j,k}  \minor{\substack{
      \;\;\vec{r_{i}}\;\; \\
       \vec{r_{j}}\\
      \vec{r_{k}}\\
      \Cm}}{\any,J}=0.
\end{equation}
This induces linear relations among the equations \eqref{eq:smm21}, as  can be verified by a cofactor expansion along the first row of the left-hand term of \eqref{eq:smm3} which yields an equation of the form
$$
\sum_{i=1}^m x_i \sum_{j=1}^m \sum_{k=1}^m S^i_{j,k}  \minor{\substack{
      \;\;\vec{r_{j}}\;\; \\
      \vec{r_{k}}\\
      \Cm}}{\any,J}=0
$$
where the $\Sm^i=(S^i_{j,k})_{\substack{1 \leq j \leq m\\ 1 \leq k \leq m}}$ are symmetric tensors of order $2$. We would then expect that the dimension of the set of linear equations obtained from \eqref{eq:smm21} is only 
$\binom{m+1}{2}\binom{n}{r+2}K - \binom{n}{r+3}\binom{m+2}{3}$ yielding an overall dimension $\mathrm{D}$ 
of order 
$$
{\mathrm D}= m \binom{n}{r+1} \binom{K+1}{2} - \binom{m+1}{2}\binom{n}{r+2}K + \binom{n}{r+3}\binom{m+2}{3},
$$
which is precisely what we observe experimentally. 
      This argument extends also to higher values of $b$, so that, if linear relations of the form considered above are the only relevant linear relations, then the number of linearly independent equations available for linearization at a given value of $b$ is:
\begin{heuristic}
      \begin{equation}
      \label{Expb<r+2}
      \Exp = \sum_{i=1}^{b}(-1)^{i+1}\binom{n}{r+i}\binom{m+i-1}{i}\binom{K+b-i-1}{b-i}.
      \end{equation}
\end{heuristic}
Experimentally, we found this to be the case with overwhelming
probability (see~\cref{experiences_minrank}) with the only general
exceptions being:

      \begin{enumerate}
      \item When $\Exp$ exceeds the number of monomials for a smaller value of $b$, typically 1, the number of equations is observed to be equal to the number of monomials for all higher values of $b$ as well, even if $\Exp$ does not exceed the total number of monomials at these higher values of $b$.
      \item When the MinRank Problem has a nontrivial solution and cannot be solved at $b=1$, we find the maximum number of linearly independent equations is not the total number of monomials but is less by 1. This is expected, since when the underlying MinRank problem has a nontrivial solution, then the Support Minors Modeling equations have a 1 dimensional solution space.
      \item When $b\ge r+2$, the equations are not any more linearly
      independent, and we give an explanation in~\cref{sec:b>r+2}.
      \end{enumerate}
      In summary, in the general case $q>b$, the number of monomials is $\binom{n}{r}\binom{K+b-1}{b}$ and  we expect to be able to linearize at degree $b$ whenever $b<r+2$ and
      \begin{equation}
        \label{condition_q_minrank}
        \binom{n}{r}\binom{K+b-1}{b} - 1 \leq \sum_{i=1}^{b}(-1)^{i+1}\binom{n}{r+i}\binom{m+i-1}{i}\binom{K+b-i-1}{b-i}
      \end{equation}
      Note that, for $b=1$, we recover the result~\eqref{b=1case}. As this system is very sparse, with $K(r+1)$ monomials 
      per equation, one can solve it using Wiedemann algorithm \cite{wiedemann1986solving}; thus the complexity to solve 
      MinRank problem when $b<q$, $b<r+2$ is 
      \begin{equation}
        \label{complexity_q_minrank}
        \mathcal{O}\left(
        K(r+1)
        \left(\binom{n}{r}\binom{K+b-1}{b}\right)^2
        \right)
      \end{equation}
      where $b$ is the smallest positive integer so that the condition (\ref{condition_q_minrank}) is fulfilled.
      \subsection{The $q=2$ case\label{seq:q=2}}
      The same considerations apply in the $q=2$ case, but due to the field equations, $x_i^2=x_i$, for systems with $b\ge2$, a number of monomials will collapse to a lower degree. This results in a system which is no longer homogeneous. Thus, in this case it is most profitable to combine the equations obtained at a given value of $b$ with those produced using all smaller values of $b$. Similar considerations to the general case imply that as long as $b<r+2$ we will have
\begin{equation}
      \Exp = \sum_{j=1}^b\sum_{i=1}^{j}(-1)^{i+1}\binom{n}{r+i}\binom{m+i-1}{i}\binom{K}{j-i}.
      \end{equation}
      equations with which to linearize the $\sum_{j=1}^b\binom{n}{r}\binom{K}{j}$
      monomials that occur at a given value of $b$. We therefore expect to be able to solve by linearization when $b < r+2$ and $b$ is large enough that

      \begin{equation}
      \label{eq:bformula:q2}
      \sum_{j=1}^b\binom{n}{r}\binom{K}{j} - 1 \leq  \sum_{j=1}^b\sum_{i=1}^{j}(-1)^{i+1}\binom{n}{r+i}\binom{m+i-1}{i}\binom{K}{j-i}.
      \end{equation}
      Similarly to the general case for any $q$ described in the previous section, the complexity to solve 
      MinRank problem when $q=2$ and $b<r+2$ is 
      \begin{equation}
        \label{complexity_q2_minrank}
        \mathcal{O}\left(
        K(r+1)
        \left(    \sum_{j=1}^b\binom{n}{r}\binom{K}{j}    \right)^2
        \right)
      \end{equation}
      where $b$ is the smallest positive integer so that the condition (\ref{eq:bformula:q2}) is fulfilled.
\subsection{Toward the $b\ge r+2$ case\label{sec:b>r+2}}
      We can also construct additional nontrivial linear relations starting at $b=r+2$. The simplest example of this sort of linear 
relations occurs when $m > r+1$. Note that each of the Support Minors modeling equations at $b=1$ is bilinear in the $x_i$ variables and a subset consisting of $r+1$ of the variables $c_T$. Note also, that there are a total of $m$ equations derived from the same subset (one for each row of $ \sum_{i=0}^{K}{x_iM_i}$ .) Therefore, if we consider the Jacobian of the $b=1$ equations with respect to the variables  $c_T$, the $m$ equations involving only $r+1$ of the variables $c_T$ will form a submatrix with $m$ rows and only $r+1$ nonzero columns. 
      Using a Cramer-like formula, we can therefore construct 
      left kernel vectors for these equations; its coefficients would 
      be degree $r+1$ polynomials in the $x_i$ variables. 
      Multiplying the equations by this kernel vector will produce zero, because the $b=1$ equations are homogeneous, and multiplying equations from a bilinear system by a kernel vector of the Jacobian of that system cancels all the highest degree terms. This suggests that \cref{Expb<r+2} needs to be modified when we consider values of $b$ that are $r+2$ or greater. These additional linear relations do not appear to be relevant in the most interesting range of $b$ for attacks on any of the cryptosystems considered, however.
      \subsection{Improvements for Generic Minrank\label{improvements_minrank}}
      The two classical improvements~\cref{improve_over} in the
      ``super''-overdetermined cases the hybrid attack
      and~\cref{hybrid_over} can also apply for Generic Minrank.
      
      We can consider applying the Support Minors Modeling techniques
      to submatrices $\sum_{i=1}^K \Mm'_ix_i$ of
      $\sum_{i=1}^K \Mm_ix_i$. Note that if $\sum_{i=1}^K \Mm_ix_i$
      has rank $ \leq r$, so does
      $\sum_{i=1}^K \Mm'_ix_i$ , so assuming we have a unique solution
      $x_i$ to both systems of equations, it will be the
      same. Generically, we will keep a unique solution in the smaller
      system as long as the decoding problem has a unique solution,
      i.e. as long as the Gilbert-Varshamov bound $K \le (m-r)(n-r)$ is
      satisfied.

      We generally find that the most beneficial settings use matrices
      with all $m$ rows, but only $n' \leq n$ of the columns. This
      corresponds to a puncturing of the corresponding matrix code over $\Fq$.
      It is always beneficial for the attacker to reduce $n'$ to the
      minimum value allowing linearization at a given degree $b$,
      however, it can sometimes lead to an even lower complexity to
      reduce $n'$ further and solve at a higher degree $b$. 

      On the other side, we can run exhaustive search on $a$ variables
      $x_i$ in $\Fq$ and solve $q^a$ systems with a smaller value of
      $b$, so that the resulting complexity is smaller than solving
      directly the system with a higher value of
      $b$. This optimization is considered in 
      the attack against ROLLO-I-256 (see \cref{tab:num}); 
      more details about this example are given in 
      \cref{attacks_against_rank_decoding}.

\subsection{Experimental results for Generic Minrank\label{experiences_minrank}}
We verified experimentally that the value of $\Exp$ correctly predicts
the number of linearly independent polynomials. We constructed random
systems (with and without a solution) for $q=2,13$, with $m=7,8$,
$r=2,3$, $n=r+3,r+4,r+5$, $K=3,\ldots,20$.  Most of the time, the
number of linearly independent polynomials was as expected. For
$q=13$, we had a few number of non-generic systems (usually less than
$1\%$ over 1000 random samples), and only in square cases where the
matrices have a predicted rank equal to the number of columns. For
$q=2$ we had a higher probability of linear dependencies, due to the
fact that over small fields, random matrices have a non-trivial
probability to be non invertible. Anyway, as soon as the field is big
enough or the number $\Exp$ is large compared to the number of
columns, all our experiments succeeded over 1000 samples.


      \subsection{Using Support Minors Modeling together with MaxMin for RD\label{complexity_q2_under}}
      Recall that from MaxMin, we obtain $m \binom {n-k-1}{r}$ homogeneous linear equations in the $c_T$'s. These can be used to produce additional equations over the same monomials as used for Support Minors Modeling with $K=m(k+1)$.
However here, unlike in the overdetermined case, it is not interesting to specialize the matrix $\Cm$. Indeed, in that case it is sufficient to assume that the first component of $\ev$ is nonzero, then we can specialize to $\trsp{(\any,0,\dots,0)}$ the  first column of $\Sm\Cm$. Now,~\cref{eq:minrank-homogeneous} gives $m-1$ linear equations involving only the $x_i$'s, that allows us to eliminate $m-1$ variables $x_i$'s from the system and reduces the number of linear variables to $K=mk+1$. We still expect a space of dimension $1$ for the $x^\alpha c_T$'s, and this will be usefull for the last step of the attack described in~\cref{sec:laststep}.

      When $q>b$, we multiply the equations from MaxMin by 
degree $b$ monomials in the $x_i$'s. When $q=2$, this can be done by multiplying the MaxMin equations by monomials of degree $b$ or less. All these considerations lead to 
a similar heuristic as Heuristic~\ref{heuristic:over-determined}, i.e. linearization is possible for $q>b$, $0<b<r+2$ when:
      \begin{multline}
        \binom{n}{r}\binom{mk+b}{b} - 1 \leq \\
        m \binom {n-k-1}{r} \binom{mk+b}{b} + \sum_{i=1}^{b}(-1)^{i+1}\binom{n}{r+i}\binom{m+i-1}{i}\binom{mk+b-i}{b-i},\nonumber
      \end{multline}
      and for $q=2$, $0<b<r+2$ whenever:
      \begin{equation}
        \label{condition_q2_under}
          A_b-1 \leq B_c+C_b
      \end{equation}
      where
      \begin{align*}
        A_b & := \sum_{j=1}^b\binom{n}{r}\binom{mk+1}{j}    \\
        B_b & := \sum_{j=1}^b\left( m \binom{n-k-1}{r}\binom{mk+1}{j}\right) \\
        C_b & := \sum_{j=1}^b\sum_{i=1}^{j}\left((-1)^{i+1}\binom{n}{r+i}\binom{m+i-1}{i}\binom{mk+1}{j-i}\right).
      \end{align*}
      For the latter, it leads to a complexity of 
      \begin{equation}
        \label{complexity_q2_under_without_wiedemann}
        \mathcal{O}\left( (B_b+C_b)A_b^{\omega-1} \right)
      \end{equation}
      where $b$ is the smallest positive integer so that the condition (\ref{condition_q2_under}) is fulfilled.
      This complexity formula correspond to solving a linear system with $A_b$ unknowns and $B_b+C_b$ equations, recall that $\omega$ is the constant of linear algebra. 

      For a large range of parameters, this system is particularly sparse, so one could take advantage of that to use Wiedemann algorithm \cite{wiedemann1986solving}.
      More precisely, for values of $m$, $n$, $r$ and $k$ of 
      ROLLO or RQC parameters 
      (see \cref{param_ROLLO_I_basic_algo} and \cref{param_rqc}) 
      for which the condition (\ref{condition_q2_under}) is 
      fulfilled, we typically find that $b\approx r$.  

      In this case, $B_b$ equations consist of $\binom{k+r+1}{r}$ monomials, $C_b$ equations consist of $(mk+1)(r+1)$ monomials, 
      and the total space of monomials is of size $A_b$. 
      The Wiedemann's algorithm complexity can be written in term of the average number of monomials per equation, in our case it is 
      \[    D_b := \frac{B_b\binom{k+r+1}{r}+C_b(mk+1)(r+1)}{B_b+C_b}.   \]
      Thus the linearized system at degree $b$ is sufficiently sparse that Wiedemann outperforms Strassen for $b\geq 2$.  
      Therefore the complexity of support minors modeling bootstrapping MaxMin for RD is 
      \begin{equation}
        \label{complexity_q2_under_wiedemann}
        \mathcal{O}\left(
        D_b
        A_b^2
        \right)
      \end{equation}
      where $b$ is still the smallest positive integer so that the condition (\ref{condition_q2_under}) is fulfilled. A similar formula applies for the case $q>b$.
      \subsection{Last step of the attack\label{sec:laststep}}
      To end the attack on MinRank using Support Minors modeling or
      the attack on RD using MaxMinors modeling in conjunction with
      Support Minors modeling, one needs to find the value of each
      unknown.  When direct linearization at degree $b$ works, we get
      $\vv = (v_{\alpha,T}^*)_{\alpha,T}$ one nonzero vector
      containing one possible value for all $x^\alpha c_T$, where the
      $x^\alpha$'s are monomials of degree $b-1$ in the $x_i$'s, and all the
      other solutions are multiples of $\vv$ (as the solution space has
      dimension 1).

      In order to extract the values of all the $x_i$'s and thus finish the attack, one needs to find one $i_0$ and one $T_0$ such that $x_{i_0}\ne 0$ and $c_{T_0}\ne 0$. This is easily done by looking for a nonzero entry $v^*$ of $\vv$ corresponding to a monomial $x_{i_0}^{b-1}c_{T_0}$. At this point, we know that there is a solution of the system with $x_{i_0}=1$ and $c_{T_0}=v^*$.
      Then by computing the quotients of the entries in $\vv$ corresponding to the monomials $x_ix_{i_0}^{b-2}c_{T_0}$ and $x_{i_0}^{b-1}c_{T_0}$ we get the values of 
      \begin{equation}  
        \label{resolution_finale}
        x_i \ = \ \frac{x_i}{x_{i_0}}\ = \ \frac{x_ix_{i_0}^{b-2}c_{T_0}}{x_{i_0}^{b-1}c_{T_0}}, \quad 1\le i \le K.
      \end{equation}
      Doing so, one gets the values of all the $x_i$'s. This finishes the attack. This works without any assumption on MinRank, and with the assumption that the first coordinate of $\ev$ is nonzero for RD. If it is not the case, one uses another coordinate of $\ev$.
\section{Complexity of the attacks for different cryptosystems 
and comparison with generic Gr\"obner basis approaches
\label{4_experi_and_theory}}

        \subsection{Attacks against the Rank Decoding problem}
        \label{attacks_against_rank_decoding}
        \cref{tab:num} presents the best complexity of our attacks (see
        Sections \ref{attack_1} and \ref{attack_2}) against RD and
        gives the binary logarithm of the complexities (column \textbf{``This
        paper''}) for all the parameters in the ROLLO and RQC
        submissions to the NIST competition and Loidreau cryptosystem
        \cite{L17}; for the sake of clarity, we give the previous best
        known complexity from \cite{BBBGNRT19} (last column).  The
        third column gives the original rate for being
        overdeterminate.  The column `$a$' indicates the number of
        specialized columns in the hybrid approach
        (\cref{hybrid_over}), when the system is not
        overdetermined. Column `$p$' indicates the number of punctured
        columns in the ``super''-overdetermined cases
        (\cref{improve_over}). Column `$b$' indicates that we use
        Support Minors Modeling in conjunction with MaxMin
        (\cref{complexity_q2_under}).

        Let us give more details on how we compute the best complexity, 
        in \cref{tab:num}, for ROLLO-I-256 whose parameters are $(m,n,k,r)=(113,134,67,7)$. 
        The attack from~\cref{attack_1}
        only works with the hybrid approach, thus requiring $a=8$ and
        resulting in a complexity of $158$ bits
        (using~\eqref{eq:complexitya} and $\omega=2.81$). On the other
        hand, the attack from~\cref{complexity_q2_under} needs 
        $b=2$ which results in a complexity of $154$ (this time using 
        Wiedemann's algorithm). However, if we specialize $a=3$
        columns in $\Cm$, we get $b=1$ and the resulting complexity
        using Wiedemann's algorithm is $151$.
        \begin{table}[h]
          \caption{\label{tab:num}Complexity of the attack against
            RD for different cryptosystems.  
            A ``*'' in column ``\textbf{This paper}'' means that the best complexity uses 
Widemann's algorithm, otherwise Strassen's algorithm is used.
}
          \centering
          \begin{tabular}{|c|c|c|c|c|c|c|c|}
            \hline & \textbf{$(m,n,k,r)$}
            &  $\frac{m\binom{n-k-1}r}{\binom{n}r-1}$  &$a$&$p$ & $b$ 
            & \textbf{This paper}  & \cite{BBBGNRT19}\\\hline
Loidreau (\cite{L17})  & $(128,120,80,4)$  & 1.28      & 0   & 43         & 0 &\textbf{65}    &98   \\\hline\hline
ROLLO-I-128  & $(79,94,47,5)$  & 1.97     &0&9                            & 0 &\textbf{71}    &117  \\\hline
ROLLO-I-192  & $(89,106,53,6)$  & 1.06    &0&0                            & 0 &\textbf{87}    &144  \\\hline
ROLLO-I-256  & $(113,134,67,7)$ &  0.67      &3& 0                        & 1 &\textbf{151*}  &197  \\\hline\hline
ROLLO-II-128 & $(83,298,149,5)$ & 2.42&0&40                               & 0 &\textbf{93}    &134  \\\hline
ROLLO-II-192  & $(107,302,151,6)$ & 1.53&0&18                             & 0 &\textbf{111}   &164  \\\hline
ROLLO-II-256  & $(127,314,157,7)$ & 0.89 &0&6                             & 1 &\textbf{159*}  &217  \\\hline\hline
ROLLO-III-128 & $(101,94,47,5)$  & 2.52&0&12                              & 0 &\textbf{70}    &119  \\\hline
ROLLO-III-192 & $(107,118,59,6)$ & 1.31&0&4                               & 0 &\textbf{88}    &148  \\\hline
ROLLO-III-256 & $(131,134,67,7)$ & 0.78&0&0                               & 1 &\textbf{131*}  &200  \\\hline\hline
RQC-I     & $(97,134,67,5)$  & 2.60&0&18                                  & 0 &\textbf{77}    &123  \\\hline
RQC-II    & $(107,202,101,6)$ &1.46&0&10                                  & 0 &\textbf{101}   &156  \\\hline
RQC-III   & $(137,262,131,7)$ & 0.93&3&0                                  & 0 &\textbf{144}   &214  \\\hline
         \end{tabular}
        \end{table}

        \subsection{Attacks against the MinRank problem}

        Table \ref{tab:GeMSS_Rainbow} shows the complexity of our attack against generic MinRank problems for GeMSS and Rainbow, two cryptosystems at the second round of the 
NIST competition. Our new attack is compared to the previous MinRank attacks, which use minors modeling in the case of GeMSS \cite{CFMPPR19}, and a linear algebra search \cite{DCPSY19} in the case of Rainbow. Concerning Rainbow, the acronyms  RBS and DA stand from Rainbow Band Separation and Direct Algebraic respectively; the column ``Best/Type'' shows the complexity of the previous best attack against Rainbow, which was not based on MinRank before our new attack (except for Ia). 
        All new complexities are computed by finding the number of columns $n'$ and the degree $b$ that minimize the complexity, as described in Section \ref{sec:underdetermined}. 

        \begin{table}
        \centering
        \caption{\label{tab:GeMSS_Rainbow}Complexity comparison between the new and the previous attacks against GeMSS and Rainbow parameters. 
        \cite{DCPSY19}.}
        \begin{tabular}{|c|c|c|c||c|c||c|c|c|}
        \hline
        &&&&&& \multicolumn{3}{c|}{\textbf{Complexity}}\\ \hline
        \textbf{GeMSS}$(D,n,\Delta,v)$ & $n/m$ & $K$  & $r$ & $n'$ & $b$ & New & Previous & Type \\ \hline
        GeMSS128$(513, 174, 12, 12) $     & 174 & 162 & 34  & 61    & 2 & \textbf{154}  & 522     & MinRank \\ \hline
        GeMSS192$(513, 256, 22, 20) $     & 265 & 243 & 52  & 94    & 2 & \textbf{223}  & 537     & MinRank\\ \hline
        GeMSS256$(513, 354, 30, 33) $     & 354 & 324 & 73  & 126   & 3 & \textbf{299}  & 1254    & MinRank\\ \hline\hline
        RedGeMSS128$(17, 177,15, 15) $    & 177 & 162 & 35  & 62    & 2 & \textbf{156}  & 538     & MinRank\\ \hline
        RedGeMSS192$(17, 266, 23, 25) $   & 266 & 243 & 53  & 95    & 2 & \textbf{224}  & 870     & MinRank\\ \hline
        RedGeMSS256$(17, 358, 34, 35) $   & 358 & 324 & 74  & 127   & 3 & \textbf{301}  & 1273    & MinRank\\ \hline\hline
        BlueGeMSS128$(129, 175, 13,14) $  & 175 & 162 & 35  & 63    & 2 & \textbf{158}  & 537     & MinRank\\ \hline
        BlueGeMSS192$(129,265, 22,23) $   & 265 & 243 & 53  & 95    & 2 & \textbf{224}  & 870     & MinRank\\ \hline
        BlueGeMSS256$(129, 358, 34, 32)$  & 358 & 324 & 74  & 127   & 3 & \textbf{301}  & 1273    & MinRank\\ \hline
          \multicolumn{9}{c}{}\\\hline
        \textbf{Rainbow}$(GF(q), v_1,o_1,o_2)$ & $n$ & $K$ & $r$ & $n'$ & $b$ & New & Previous & Best / Type \\ \hline 
        Ia$(GF(16),32,32,32)$ & 96 & 33 & 64            & 82  & 3 & 155 & 161 & 145 / RBS \\ \hline
        IIIc$(GF(256),68,36,36)$ & 140 & 37 & 104       & 125 & 5 & \textbf{208} & 585 & 215 / DA\\ \hline
        Vc$(GF(256),92,48,48)$ & 188 & 49 & 140         & 169 & 5 & \textbf{272} & 778 & 275 / DA \\ \hline
        \end{tabular}
        \end{table}
        %

        \subsection{Our approach vs. using 
generic Gr\"obner basis algorithms}

Since our approach is an algebraic attack, it relies on solving a polynomial system, thus it 
looks like a Gr\"obner basis computation. 
In fact,
we do compute a Gr\"obner basis of the system, as we compute the unique
solution of the system, which represents its Gr\"obner basis. 

Nevertheless, our algorithm is not a generic Gr\"obner basis 
algorithm as it only works for the 
special type of system studied in this paper: the RD and MinRank
systems. As it is specifically designed for this purpose and for the 
reasons detailed below, it is more efficient than a generic algorithm. 

There are three main reasons why our approach is more efficient than
a generic Gr\"obner basis algorithm:
\begin{itemize}
\item[$\bullet$] We compute formally (that is to say at no extra cost except the 
size of the equations) new equations of degree r (the $\MaxMinors$ ones) 
that are already in the ideal, but not in the vector space
\[ 
  \mathcal{F}_r := 
  \langle uf : \text{$u$ monomial of degree $r-2$,  
  $f$ in the set of initial polynomials} \rangle.
\]
In fact, a careful analysis of a Gr\"obner basis computation with a
standard strategy shows that those equations are in 
$\mathcal{F}_{r+1}$, and that
the first degree fall for those systems is $r+1$. Here, 
we apply linear algebra directly on 
a small number of polynomials of degree
$r$ (see the next two items for more details), 
whereas a generic Gr\"obner basis algorithm would compute many polynomials
of degree $r+1$ and then reduce them in order to get those polynomials 
of degree $r$.
\item[$\bullet$] A classical Gr\"obner basis algorithm 
using linear algebra and a standard strategy typically constructs Macaulay matrices,
where the rows correspond to polynomials in the ideal and the columns
to monomials of a certain degree. Here, we introduce variables $c_T$
that represent maximal minors of $\mat{C}$, and thus represent not one
monomial of degree $r$, but $r!$ monomials of degree $r$. 
As we compute the Gr\"obner basis by using only polynomials that can be
expressed in terms of those variables (see the last item below), this 
reduces the number of columns of our matrices by a factor 
around $r!$ compared to generic Macaulay-like matrices.
\item[$\bullet$] The solution can be found by applying linear algebra 
only to some specific
equations, namely the MaxMinors ones in the overdetermined case, and
in the underdetermined case, equations that have degree 1 in the $c_T$
variables, and degree $b-1$ in the $x_i$ variables 
(see \cref{solving_support_minors_higher_degree}). 
This enables us to deal with polynomials involving only the $c_T$
variables and the $x_i$ variables, whereas a generic Gr\"obner basis algorithm would
consider all monomials up to degree $r+b$ in the $x_l$ and the $c_{i,j}$
variables. This drastically reduces the number of rows and columns in 
our matrices.
\end{itemize}

For all of those reasons, in the overdetermined case, only an elimination 
on our selected MaxMinors equations (with a ``compacted'' matrix with 
respect to the columns) 
is sufficient to get the solution; so we essentially avoid going 
up to the degree $r+1$ to produce those equations, 
we select a small number of rows, and gain a factor $r!$ on the number of columns. 

In the underdetermined case, we find linear equations by linearization
on some well-chosen subspaces of the vector space 
$\mathcal{F}_{r+b}$. We have
theoretical reasons to believe that our choice of subspace should lead 
to the computation of the solution (as usual, this is a ``genericity''
hypothesis), and it is confirmed by all our experiments. 

\section{Examples of new parameters for ROLLO-I and RQC}

        In light of the attacks presented in this article, it is possible to 
        give a few examples 
        of new parameters for the rank-based cryptosystems, submitted to the 
        NIST competition, ROLLO and RQC. With these new parameters, 
        ROLLO and RQC would be resistant to our attacks, while still 
        remaining attractive, for example with a loss of only 
        about 50 \% in terms of key size for ROLLO-I.

        For cryptographic purpose, parameters have to belong to an area which 
        does not correspond to the overdetermined case 
        and such that the hybrid approach would make the attack worse than in 
        the underdetermined case.

        Alongside the algebraic attacks in this paper, the best combinatorial attack against RD is 
        in \cite{AGHT18_sv}; its complexity for $(m,n,k,r)$ decoding is
        \[   \mathcal{O}\left((nm)^2 q^{r\Ceiling{\frac{m(k+1)}{n}}-m}\right).  \]
        In the following tables, we consider $\omega=2.81$. We also use the same notatin as in ROLLO and RQC 
       submissions' specifications \acite{RQC2,ROLLO}. In particular, $n$ is the block-length and not the 
        length of the code which can be either $2n$ or $3n$. Moreover, for ROLLO 
        (\cref{param_ROLLO_I_basic_algo}): 
        \begin{itemize}
          \item {\bf over/hybrid} is the cost of the hybrid attack; the value of $a$ is the smallest 
          to reach the overdetermined case, $a=0$ means that parameters are already in 
          the overdetermined case,
          \item {\bf under} is the case of underdetermined attack.
          \item {\bf comb} is the the cost of the best combinatorial attack mentioned above,
          \item {\bf DFR} is the binary logarithm of the 
          Decoding Failure Rate,
        \end{itemize} and for RQC (\cref{param_rqc}): 
        \begin{itemize}
          \item[$\bullet$] {\bf hyb2n(a)}: hybrid attack for length 
          $2n$, the value of $a$ is the smallest 
          to reach the overdetermined case, $a=0$ means that parameters are already in 
          the overdetermined case,
          \item[$\bullet$] {\bf hyb3n(a)}: non-homogeneous hybrid attack for length $3n$, $a$ is the same 
          as above. This attack corresponds to an adaptation of our 
          attack to a non-homogeneous error of the RQC scheme, 
          more details are given in \acite{RQC2},
          \item[$\bullet$] {\bf und2n}: underdetermined attack for length $2n$,
          \item[$\bullet$] {\bf comb3n}: combinatorial attack for length $3n$.
        \end{itemize}

        \begin{table}
          \centering
          \begin{tabular}{|l|c|c|c|c|c|c|c|c|c|c|c|c|c|}
          \hline
          Instance & $q$ & $n$ & $m$ & $r$ & $d$ & pk size (B) & DFR &  over/hybrid & $a$ & $p$ & under & $b$ &comb \\
          \hline
          {\tiny new2ROLLO-I-128} & 2 & 83 & 73 & 7 & 8 & 757 &-27 & 233 & 18& 0& 180 &3& 213 \\
          \hline
          {\tiny new2ROLLO-I-192} & 2 & 97 & 89 & 8 & 8 & 1057 & -33 & 258*& 17 & 0&197* &3& 283* \\
          \hline
          {\tiny new2ROLLO-I-256} & 2 & 113 & 103 & 9 & 9& 1454 & -33 & 408* & 30 &0&283* &6& 376* \\
          \hline
          \end{tabular}
          \caption{\label{param_ROLLO_I_basic_algo}New parameters and attacks complexities for ROLLO-I.}
        \end{table}
        \begin{table}
          \centering
          \begin{tabular}{|l|c|c|c|c|c|c|c|c|c|c|c|c|c|}
          \hline
          Instance & $q$ & $n$ & $m$ & $k$ & $w$ & $w_r$ & $\delta$ & pk (B) &  hyb2n(a) & hyb3n(a) & und2n& $b$ &comb3n \\
          \hline
          newRQC-I & 2 & 113 & 127 & 3 & 7 & 7 & 6 & 1793 & 160(6) & 211(0) & 158 & 1       & 205 \\
          \hline
          newRQC-II & 2 & 149 & 151 & 5 &  8 & 8 & 8 & 2812 & 331(24) & 262(0) & 224 &3     & 289 \\
          \hline
          newRQC-III & 2 & 179 & 181 & 3 & 9 & 9 & 7 & 4049 & 553(44) & 321(5) & 324 & 6    & 401 \\
          \hline

          \end{tabular}
          \caption{\label{param_rqc}New parameters 
          and attacks complexities for RQC.}
        \end{table}

\section{Conclusion}

      In this paper, we improve on the results by \cite{BBBGNRT19} on the Rank Decoding problem
      by providing a better
      analysis which permits to avoid the use of 
      generic Gr\"obner bases algorithms 
      and permits to 
      completely break rank-based cryptosystems parameters proposed to the 
      NIST Standardization Process,
      when the analysis in \cite{BBBGNRT19} only attacked them slightly.

      We generalize this approach to the case
      of the MinRank problem for which we obtain the best known complexity with algebraic attacks. 

      Overall, the results proposed in this paper give a new and deeper understanding 
      of the connections and the complexity of two problems of great interest in post-quantum cryptography: 
      the Rank Decoding and the MinRank problems. 




\section*{Acknowledgements}

  We would like to warmly thank 
the reviewers, who did a wonderful 
  job by carefully reading our 
article and giving us useful feedback. 

  This work has been supported by the French ANR project CBCRYPT
  (ANR-17-CE39-0007) and the MOUSTIC project with the support from the
  European Regional Development Fund 
and the Regional Council of
  Normandie.

  Javier Verbel was supported for this work by Colciencias scholarship 757 for PhD studies and the University of Louisville facilities. 
  
  We would like to thank John B Baena, and Karan Khathuria for useful discussions.
  We thank the Facultad de Ciencias of the Universidad Nacional de Colombia sede Medellín 
  for granting us access to the Enlace server, where we ran some of the experiments.

\bibliographystyle{splncs04}
\bibliography{articlev2}

\end{document}